\newcommand{\short}[1]{#1}
\newcommand{\full}[1]{#1}
\renewcommand{\short}[1]{}
\algnewcommand\algorithmicswitch{\textbf{switch}}
\algnewcommand\algorithmiccase{\textbf{case}}
\newcommand\nat{\mathbb{N}}
\newcommand\integer{\mathbb{Z}}
\newcommand{\NBox}{\ensuremath{\mathbf{G}}}
\newcommand{\Act}{Act}
\newcommand{\rbm}{\acro{RBM}}
\newcommand{\Resources}{\resourcedomain}
\newcommand{\resourcedomain}{\ensuremath{{\mathcal{R}\!\mathit{es}}}}
\newcommand{\rhooutcome}[3]{\ensuremath{\mathit{out}(#1,#2,#3)}}
\renewcommand{\model}{\ensuremath{\mathfrak{M}}\xspace}
\newcommand{\enment}{\ensuremath{\eta}\xspace}
\newcommand{\production}{\mathsf{prod}}
\newcommand{\consumption}{\mathsf{cons}}
\newcommand{\Enments}{\ensuremath{\mathsf{En}}}
\newcommand{\brian}[1]{{\color{blue}Brian: #1}}
\renewcommand{\brian}[1]{}
\newcommand{\modelsR}{\ensuremath{\models_R}\xspace}
\newcommand{\coopdown}[2][]{\ensuremath{\coop{#2}{}_{#1}^{\downarrow}}}
\newcommand{\rbmi}{\acro{iRBM}}
\newcommand{\ral}{\ensuremath{\mathsf{RAL}}\xspace}
\renewcommand{\modelsR}{\models}
\renewcommand{\CTL}{\ensuremath{\mathsf{CTL}}\xspace}
\renewcommand{\ATL}{\ensuremath{\mathsf{ATL}}\xspace}
\newcommand{\wall}{\begin{aligned}[t] &}
\newcommand{\return}{\end{aligned}}
\newcommand{\inductioncase}[1]{\noindent\textbf{#1}}
\let\phi\varphi
\title{Model Checking Resource Bounded Systems with Shared Resources via Alternating B\"uchi Pushdown Systems}
\author{Nils Bulling\inst{1}  \and Hoang Nga Nguyen\inst{2} }
\institute{
Delft University of Technology, The Netherlands \and
School of Computer Science, University of Nottingham, UK\\
}
\begin{document}

\maketitle

\begin{abstract}
It is well known that the verification of resource-constrained multi-agent systems is undecidable in general. In many such settings, resources are private to agents. In this paper, we investigate the model checking problem for a resource logic based on Alternating-Time Temporal Logic (\ATL) with shared resources. Resources can be consumed and produced up to any amount. We show that the model checking problem is undecidable if two or more of such unbounded resources are available. Our main technical result is that in the case of a single shared resource, the problem becomes decidable. Although intuitive, the proof of decidability is non-trivial. We  reduce model checking to a problem over alternating B\"uchi pushdown systems. An  intermediate result connects to general automata-based  verification: we show that model checking Computation Tree Logic (\CTL) over (compact) alternating B\"uchi pushdown systems is decidable.
\end{abstract}

\section{Introduction and Related Work}
Research on resource-constrained multi-agent systems has become a popular topic in recent years, e.g.~\cite{BullingFarwer09rtl-clima-post,Bulling/Farwer:10a,Alechina//:10a,DellaMonica//:13a,Alechina//:14c,Bulling15ral-IJCAI}. In particular, the verification of strategic agents acting under resource-constraints has been investigated by  researchers; many of these approaches extend the alternating-time temporal logic (\ATL)~\cite{Alur//:02a} with actions that, in the general case, consume or produce resources. If no bound on the possible amount of resources is given the model checking problems are easily undecidable~\cite{Bulling/Farwer:10a}. Exceptions are possible if restrictions are imposed on the language~\cite{Bulling15ral-IJCAI} or on the semantics~\cite{Alechina//:14c,DellaMonica//:13a}. In many settings,  resources are private to agents, each agent has its own set of resources. In~\cite{DellaMonica//:13a}  resources are shared and a resource money is used to claim resources. The authors present a decidable model checking result which is possible as the amount of resources is bounded. In this paper we are interested in the model checking problem where resources are shared and \emph{unbounded}; resources can be consumed and produced without an upper bound on the total number of resources. The setting is rather natural. Resources are shared in e.g.,  the travel budget of a computer science department. All departmental members compete for the travel budget. Parts of the travel money of a successful grant application will be credited to the department's budget; there is no a priori bound on the total budget.

In this paper, we show that the model checking problem for the resource agent logic $\ral$~\cite{Bulling/Farwer:10a} considered here is undecidable in general  when there are more than two of such unbounded shared resources. This result follows as a corollary from~\cite{Bulling/Farwer:10a,Bulling15ral-IJCAI} where model checking resource bounded systems with private, unbounded resources has been proved undecidable. 
Secondly, we show that model checking  \ral is decidable in case of  a single shared, unbounded resource. Although this seems intuitive, as a single unbounded resource can intuitively  be encoded by a single stack/counter, its proof is (technically)  non-trivial and is based on a reduction to  alternating B\"uchi pushdown systems~\cite{song2014efficient,bouajjani1997reachability}. We first introduce  compact alternating B\"uchi pushdown systems (CABPDSs)  to encode the resource bounded models of our logic such  that the runs of the automaton can be related to execution trees of a given set of agents in the model. We show that model checking \CTL over these systems is decidable using results of~\cite{song2014efficient}. Finally, we reduce model checking \ral to model checking \CTL over CABPDSs. These results extend work on model checking \CTL over pushdown systems where atomic propositions can be given by regular languages~\cite{song2014efficient}. The latter results, in turn, are based on~\cite{bouajjani1997reachability} where reachability of alternating pushdown systems and model checking problems over pushdown systems with standard labelling functions are investigated. Model checking \CTL  over pushdown systems and its computational complexity have also been considered in~\cite{bozzelli2007complexity}. Our model checking problem is also related to reachability in B\"uchi games~\cite{cachat2002symbolic}. Many complexity results about ABPDSs and their variants are known and established in the above mentioned pieces of work. In our future research we  plan to determine the exact computational complexity of the model checking problem for resource agent logic (\ral) over  $1$-unbounded resource bounded models.

The paper is organised as follows. In Section~\ref{sec:logic} we \full{discuss different resource types and} introduce our version of resource agent logic with shared resources.  In Section~\ref{sec:ABPDS} we recall  alternating B\"uchi pushdown systems (ABPDSs) and variants thereof. We propose compact ABPDSs for encoding our models. We show that model checking \CTL over them is decidable. In Section \ref{sec:decidable} we give our main decidability result for a single unbounded resource\short{ and also conclude undecidability for the general setting with more than 1-unbounded shared resource}. Finally, \full{in Section~\ref{sec:undecidable} we consider the general case and show that model checking is undecidability if at least two unbounded resource types are available, and }\short{we }conclude in Section~\ref{sec:concl}. \short{Due to lack of space, we have to skip details and proofs. An extended version of this paper  can be found  in~\cite{Bulling15ABPDS-xarchive}.}
\vspace{-0.4cm}

\section{Resource Agent Logic} \label{sec:logic} 
In this section we define the logic \emph{resource agent logic} \ral and resource-bounded models. The framework is essentially based on~\cite{Bulling15ral-IJCAI}. \full{We begin with a discussion on different resource types which can be classified among different dimensions:
\begin{description}
\item \emph{Private} resources are assigned to individual agents.
\item \emph{Shared} resources can be accessed by all agents; they are global.
\item \emph{Consumable/producible} resources can be consumed and produced. They often disappear after usage, like gasoline and energy, and may thus also be  labelled \emph{fluent}.
\item \emph{Re-usable} resources do not in general disappear after usage. They may also be produced.
\item \emph{Bounded/unbounded} resource types characterise whether arbitrarily many resources can be produced or if there is a bound on the maximal amount.
\end{description}
We note that the property of boundedness has a different flavour in comparison to the other properties. It is better understood as a property of the agents or of the specific modelling rather than of the resource itself. For example, the agent can only carry up to two heavy boxes, or there are legislations which prohibit to have more than three cars in a household. In this paper we are interested in shared, unbounded, consumable  resource types. That is, there is a common pool of resources for which all agents compete. Agents' actions may consume resources or produce them, always affecting the common pool of resources. Moreover, there can be arbitrarily many resources of a resource type.

Clearly, in real settings there will usually be a combination of different resource types. Adding bounded resources will in general not affect the decidability of model checking. Such resources can be encoded in the states, blowing up the model. One has to be more careful with unbounded resources. Having at least two  unbounded resource types is often a first indication for an  undecidable model checking problem if no other restrictions are imposed on the  setting.} 
%
\short{But, in this paper we are intersted in \emph{shared resources}. There is a common pool of resources and agents compete for them. There are more dimensions along which resources can be classified~\cite{Bulling15ABPDS-xarchive}, one of these dimensions is \emph{boundedness}. A resource is called \emph{unbounded} if there is no a priori bound on the number of available resources, in principle they can be produced without limit. Settings with only bounded resources are often  decidable~\cite{Bulling/Farwer:10a}. In the following we consider unbounded resources and assume that  $\Resources$ is a finite non-empty set of such unbounded, shared resource types. } 
\full{Following the discussion above, we define a}\short{A }  \emph{(shared) endowment (function)} $\eta : \Resources \rightarrow \nat_0$
\full{to specify}\short{specifies} the available shared resources of the resource types  $\Resources$ in the system; i.e., $\eta(r)$ is the number of shared resources of type $r$.  With $\Enments$ we denote the set of all possible endowments.  A special minimal endowment function is denoted by $\bar0$. It expresses that there are no resources at all. 

\full{\begin{definition}[Shared resource structure, unbounded]
A \emph{(shared) resource structure} is a tuple $\mathfrak{R}=(\Resources,\beta)$ where $\Resources$ is a finite set of shared  consumable resources. Function $\beta: \Resources\rightarrow \mathbb{N}\cup\{\infty\}$ is called \emph{resource bound}. It specifies the maximal number of resources of a specific type in the model. We say that $\mathfrak{R}$ is \emph{$k$-unbounded} iff the number of unbounded resource types is at most $k$.
\end{definition}}

\textbf{Syntax.} Resource agent logic (\ral) is defined over a set of agents $\Agents$ and a set of propositional symbols $\Props$.  
\ral-formulae\footnote{Note that we slightly change the notation in comparison with~\cite{Bulling/Farwer:10a} where $\coopdown{A}$ has the meaning of $\coopdown[\Agt]{A}$. Moreover, we only use operators that refer to the currently available resources in the system.}
are essentially generated according to the grammar of $\mathsf{ATL}$~\cite{Alur//:02a} 
as follows: $\phi::= \prop{p} \mid \neg \phi \mid \phi \wedge \phi \mid 
         \coopdown{A}\Next \varphi \mid  
         \coopdown{A}\varphi \NUntil \psi  \mid        
         \coopdown{A}\NBox \varphi$ 
%
         where $\prop{p}\in\Props$ is a proposition and $A\subseteq \Agt$ is a set of agents. 
        
        A formula  $\coopdown{A}\varphi$ is called \emph{flat} if $\varphi$ contains no  cooperation modalities. 
         The operators $\Next$, $\NUntil$, and $\NBox$ denote the standard temporal operators expressing that some property holds in the \emph{next} point in time,  \emph{until} some other property holds, and \emph{now and always} in the future, respectively.  The \emph{eventually} operator is defined as macro: $\Sometm\varphi=\top\NUntil\varphi$ (\emph{now or sometime in the future}).  
         The  cooperation modality  $\coopdown{A}$  assumes that \emph{all} agents  in \Agents act under resource constraints. The reading of $\coopdown{A}{}\varphi$ is that
           \emph{agents $A$ have a strategy compatible with the currently available resources} to enforce $\varphi$. 
           This  means that the strategy can be executed given the agents' resources. Thus, it is necessary to keep track of resource production and consumption during the execution of a strategy. 


\textbf{Semantics.} We define the models of \ral as in~\cite{Bulling15ral-IJCAI}. We also introduce a special class of these models in which agents have an \emph{idle action} in their repertoire that neither consumes nor produces resources. Note that a model with idle actions is a special case of the general model.

\vspace{-1ex}
\begin{definition}[\rbm, \rbmi, unbounded]
A resource-bounded model (\rbm) is given by $\model =
(\Agt,Q,\Pi,\pi,\Act,d,o,\full{\mathfrak{R}}\short{\Resources},t)$ where \full{$\mathfrak{R}=(\Resources,\beta)$ is a shared resource structure}\short{$\Resources$ is a set of shared, unbounded resources}, $\Agt=\{1,\ldots,k\}$ is a set of agents; $\pi : \Props \to \powerset{\States}$ is a valuation of
propositions; $\Act$ is a finite set of actions; and the function $d : \Agt \times Q
\to \powerset{\Act}\backslash\{\emptyset\}$ indicates the actions available to agent $a \in \Agt$ at state $q
\in Q$. We write $d_a(q)$ instead of $d(a, q)$, and use $d(q)$ to denote the set
$d_1(q) \times \ldots \times d_k(q)$ of action profiles in state $q$. Similarly, $d_A(q)$ denotes the action tuples available to $A$ at $q$. $o$ is a
 transition function which maps each state $q \in Q$ and action profile
$\vec{\alpha} = (\alpha_1, \ldots, \alpha_k) \in d(q)$ (specifying a
move for each agent) to another state $q' = o(q,\vec{\alpha})$. Finally, the
function $t : \Act \times \Resources \to \integer$ models the resources consumed and
produced by actions. We define $\production(\alpha,r) := \max\{0, t(\alpha,
r)\}$ (resp.\ $\consumption(\alpha, r) := \min\{0, t(\alpha, r)\}$) as the
amount of resource $r$ produced (resp.\ consumed) by action $\alpha$.  For $\vec{\alpha} = ( \alpha_1,\ldots,\alpha_k)$, we use
$\vec{\alpha}_A$ to denote the sub-tuple consisting of the actions of agents $A
\subseteq \Agt$.\short{ We call an \rbm \emph{$k$-unbounded} if $|\Resources|=k$ for a natural number $k$.} 

An \emph{\rbm with idle actions}, $\rbmi$ for short, is an \rbm \model such that for all agents $a$, all states $q$, there is an action $\alpha\in d_a(q)$ such that for all resource types $r$ in \model we have that  $t(\alpha,r)=0$. We refer to this action (or to one of them if there is more than one) as the \emph{idle action} of $a$ and denote it by $idle$.
\end{definition}

\vspace{-2ex}
A \emph{path}  
$\lambda \in Q^\omega$ is an infinite sequence
of states such that there is a transition between two adjacent states. A
\emph{resource-extended} path $\lambda \in 
(Q\times\Enments)^\omega$ is an infinite sequence over
$Q\times\Enments$ such that the restriction to states (the first component),
denoted by $\lambda|_Q$, is a path in the underlying model.  The projection of
$\lambda$ to the second component of each element in the sequence is denoted by
$\lambda|_\Enments$. 
We define $\lambda[i]$ to be the $i+1$-th element of $\lambda$, and
$\lambda[i,\infty]$ to be the suffix $\lambda[i]\lambda[i+1]\ldots$. 
A \emph{strategy}\footnote{{We note that differently from~\cite{Bulling/Farwer:10a,Alechina//:14c,Bulling15ral-IJCAI}, our notion of strategy takes the history of states as well as the history of endowments into account. In the setting considered here such strategies are more powerful than strategies only taking the state-component into account.}} for a coalition $A \subseteq \Agt$ is a function 
$s_A : (Q\times\Enments)^+ \to \Act^A$ such that $s_A((q_0,\enment_0)\ldots(q_n,\enment_n)) \in d_A(q_n)$ for $(q_0,\enment_0)\ldots(q_n,\enment_n)\in (Q\times\Enments)^+$. Such a strategy gives rise to a set of (resource-extended) paths that can emerge if agents  follow their  strategies. A \emph{$(q,\eta,s_A)$-path} is a 
resource-extended path $\lambda$ such that for all $i=0,1,\ldots$
with $\lambda[i]:=(q_i,\eta_i)$ there is an action profile
$\vec{\alpha} \in d(\lambda|_\States[i])$ such that:
\full{\begin{enumerate}}
\full{\item }\short{(i) }$q_0=q$ and  $\eta_0(r)\short{=\enment(r)}\full{ = \min\{\beta(r),\enment(r)\}}$ for all $r\in\Resources$ (describes  initial configuration); 
\full{\item }\short{(ii) } $s_A(\lambda[0,i])=  \vec{\alpha}_A$ 
 ($A$ follow their strategy); 
\full{\item }\short{(iii) } $\lambda|_Q[i+1]=o(\lambda|_Q[i],\vec{\alpha})$ 
 (transition according to $\vec{\alpha}$); 
\full{\item }\short{(iv) } for all $\vec{\alpha}'\in \Act_{\Agt\backslash A}$ and for all $r\in\Resources$: $\eta_{i}(r) \geq \sum_{a \in \Agt\backslash A}\consumption(\vec{\alpha}'_a,r)+\sum_{a \in A}\consumption(\vec{\alpha}_a,r)$ 
 (enough resources to perform the actions are available); and
 \full{\item}\short{(v)}   $\eta_{i+1}(r)=\eta_i(r)+\sum_{a\in\Agents}\production(\vec{\alpha}_a) -\sum_{a\in\Agents}\consumption(\vec{\alpha}_a)$ for all $r\in\Resources$.
\full{\end{enumerate}} Condition (iv) models that the opponents have priority when claiming resources.
The \emph{$(q,\eta,s_A)$-outcome} of a strategy $s_{A}$ in $q$,
$\rhooutcome{q}{\eta}{s_{A}}$, is defined as the set of all  $(q,\eta,s_{A})$-paths
starting in $q$. We also refer to this set as an \emph{execution tree} of $A$.  Truth is defined over an \rbm $\model$, a state $q\in\States$, and an endowment
$\eta$. The \emph{semantics} is given by the satisfaction relation
$\modelsR$\full{ defined below}. \short{Here, we only present clauses for two types of formulae: $\model,\state,\enment \modelsR\prop{p}$
  iff $\prop{p}\in\Props$ and $q\in\pi(\prop{p})$; and $\model,q,\enment \modelsR \coopdown{A}\psi\NUntil\varphi$
    iff  there exists a strategy $s_A$ for $A$ such that for all $\onepath\in out(\state,\enment,s_A)$, there is an $i$ with $i\geq 0$ 
    and 
    $\model,\onepath|_Q[i],\onepath|_\Enments[i] \modelsR \varphi$ such that 
    for all $j$ with $0 \leq j < i$ it holds that 
    $\model,\onepath|_Q[j],\onepath|_\Enments[j] \modelsR \psi$. The other clauses are given analogously, cf.~\cite{Bulling15ABPDS-xarchive}. }
\full{\begin{description}
\item[$\model,\state,\enment \modelsR\prop{p}$]
  iff $\prop{p}\in\Props$ and $q\in\pi(\prop{p})$.
 \item[$\model,\state,\enment \modelsR\varphi_1\wedge\varphi_2$]
   iff $\model,\state,\enment \modelsR\varphi_1$ and $\model,\state,\enment \modelsR\varphi_2$
 \item[$\model,\state,\enment \modelsR\neg \varphi$]
   iff it is not the case that  $\model,\state,\enment \modelsR\varphi$
\item[$\model,\state,\enment \modelsR{\coopdown{A}}\Next\varphi$]
  iff there is a strategy $s_A$ for $A$ such that
  for all $\onepath\in out(\state,\enment,s_A)$, 
  $\model,\onepath_Q[1],\enment \modelsR \varphi$

\item[$\model,q,\enment \modelsR \coopdown{A}\psi\NUntil\varphi$]
  iff  there exists a strategy $s_A$ for $A$ such that for all $\onepath\in out(\state,\enment,s_A)$, there is an $i$ with $i\geq 0$ 
  and 
  $\model,\onepath|_Q[i],\onepath|_\Enments[i] \modelsR \varphi$ such that 
  for all $j$ with $0 \leq j < i$ it holds that 
  $\model,\onepath|_Q[j],\onepath|_\Enments[j] \modelsR \psi$
  
 \item[$\model,q,\enment \modelsR\coopdown{A}\NBox\varphi$]
   iff there exists a strategy $s_A$ for $A$ such that for all $\onepath\in out(\state,\enment,s_A)$ and 
   all $i\geq 0$, 
   $\model,\onepath|_Q[i],\onepath|_\Enments[i] \modelsR \varphi$ 

\end{description}}
The  \emph{model checking problem} is to determine whether  $\model,q,\enment\models\varphi$ holds. 

\textbf{Example.} We illustrate the framework by extending  the introductory example on the departmental 
travelling budget. Consider a  department
which consists of a dean $d$, two professors $p_1, p_2$ and three lecturers 
$l_1, l_2,$ and $l_3$. The department's  travel budget 
is allocated  annually and can be spent 
to attend  conferences. 
There
are three categories to request money: premium, 
advanced, and economic. All options are available to the dean,
the last two to professors, and only the last one to the lecturers.
For instance, if the cost of attending PRIMA\full{\footnote{PRIMA is the acronym for the conference Principles and Pratice of Multi-Agent Systems. A short version of this paper was accepted for PRIMA 2015~\cite{Bulling15ABPDS}.}} is, depending on the category, \$2000, \$1000, and \$500,  
respectively, then with an available budget of \$4000 not all lecturers can be sure to be able to attend PRIMA. Because,  the dean and the professors could all decide to attend PRIMA and to request the advanced category. In that case, only \$1000 would remain, not enough  for all lecturers to attend; formally specified,  
$\coopdown{d,p_1,p_2}\Sometm(\prop{d}\wedge \prop{p_1}\wedge\prop{p_2}\wedge\neg \coopdown{\{l_1,l_2,l_3\}}\Sometm(\prop{l_1}\wedge \prop{l_2}\wedge\prop{l_3}))$ is true where a proposition $\prop{x}$ expresses that ``person'' $x$ is attending PRIMA. Equivalently, $\neg \coopdown{\{l_1,l_2,l_3\}}\Sometm(\prop{l_1}\wedge \prop{l_2}\wedge\prop{l_3}))$ is true; this highlights that the opponents have priority in claiming resources. However, by collaborating with the professors, they have a strategy which allows all lecturers to attend, independent of the actions of the dean: 
i.e., $\coopdown{\{p_1,p_2,l_1,l_2,l_3\}}\Sometm (\prop{l_1}\wedge \prop{l_2}\wedge\prop{l_3})$.

\section{Model Checking CTL over B\"uchi Pushdown Systems}\label{sec:ABPDS}
\newcommand{\PDS}{\ensuremath{\mathcal{P}}\xspace}
\newcommand{\APDS}{\ensuremath{\mathcal{P}}\xspace}
\newcommand{\AAut}{\ensuremath{\mathcal{A}}\xspace}
\newcommand{\Lc}{\ensuremath{\widehat{L}}\xspace}
\newcommand{\ABPDS}{\ensuremath{\mathcal{B}}\xspace}
\newcommand{\Pre}{\ensuremath{\mathsf{Pre}}\xspace}
\newcommand{\Cnf}{\ensuremath{\mathsf{Cnf}}\xspace}
\newcommand{\Post}{\ensuremath{\mathsf{Post}}\xspace}
\newcommand{\cl}{\ensuremath{\mathsf{cl}}\xspace}
\newcommand{\lab}{\ensuremath{\mathsf{lab}}\xspace}
\newcommand{\Release}{\ensuremath{\mathbf{R}}\xspace}
\newcommand{\CABPDS}{\ensuremath{\mathcal{C}}\xspace}

We first review existing results on alternating B\"uchi pushdown systems (ABPDSs).  Then, we use these results to give an automata-theoretic approach to model check \LogCTL-formulae over \emph{compact} ABPDSs.  The latter will be used to encode \rbm{}s in Section~\ref{sec:decidable}. 
\full{ An \emph{alphabet} $\Gamma$ is a non-empty, finite set of symbols. $\Gamma^*$ denotes the set consisting of all finite words over $\Gamma$ including the empty word $\epsilon$. Typical symbols from $\Gamma$ are denoted by $a,b,\ldots$ and words by $w,v,u,\ldots$. We read words from left to right. As before, we assume that  $\Props$ denotes a finite, non-empty set of propositions.}
\subsection{Alternating B\"uchi Pushdown Systems}
\short{An \emph{alphabet} $\Gamma$ is a non-empty, finite set of symbols. $\Gamma^*$ denotes the set consisting of all finite words over $\Gamma$ including the empty word $\epsilon$. Typical symbols from $\Gamma$ are denoted by $a,b,\ldots$ and words by $w,v,u,\ldots$. We read words from left to right. As before, we assume that  $\Props$ denotes a finite, non-empty set of propositions. } We  use words to represent the stack content. We say that word $w=a_1\ldots a_n$ is on the stack if $a_1$ is the lowest symbol, followed by $a_2$ and so forth. The symbol on top is $a_n$. An \emph{alternating pushdown system} (APDS) is a tuple $\APDS=(P,\Gamma,\Delta)$ where $P$ is a non-empty, finite set of control states, $\Gamma$ a non-empty, finite (stack) alphabet, and $\Delta\subseteq(P\times \Gamma)\times \powerset{P\times\Gamma^*}$ a transition relation~\cite{bouajjani1997reachability,suwimonteerabuth2006efficient-tr}. We call \APDS a \emph{pushdown system} (PDS) if $(s,a)\Delta X$ implies $|X|=1$ where $X\in\powerset{P\times\Gamma^*}$. An \emph{alternating B\"uchi pushdown system} (ABPDS) $\ABPDS=(P,\Gamma,\Delta,F)$ is defined as a APDS but a set of accepting states $F\subseteq P$ is added. 
 In the following we  focus on ABPDSs, but most of the definitions do also apply to APDSs and PDSs with obvious changes. A transition $(p,a)\Delta\{(p_1,w_1),\ldots,(p_n,w_n)\}$ represents that if the system is in state $p$ and the top-stack symbol is $a$ then the ABPDS \ABPDS is copied $n$-times where the $i$th copy changes its local state to $p_i$, pops $a$ from the stack and pushes $w_i$ on the stack, $1\leq i\leq n$. For a  transition rule $(p,a)\Delta\{(p_1,w_1),\ldots,(p_n,w_n)\}$ and a stack content $w\in\Gamma^*$ we say that $(p,wa)$ is an \emph{immediate predecessor} of $\{(p_1,ww_1),\ldots,(p_n,ww_n)\}$. We write $(p,wa)\Rightarrow_\ABPDS\{(p_1,ww_1),\ldots,(p_n,ww_n)\}$.  \full{We also say that $\{(p_1,ww_1),\ldots,(p_n,ww_n)\}$ is an \emph{immediate successor} of $(p,wa)$. We often write $(p,a)\Delta(p',w)$ for $(p,a)\Delta\{(p',w)\}$. Finally, we  would like to note that a \emph{stack bottom symbol} $\#$ can be defined the only purpose of which is to denote that the stack is empty. Apart from this the symbol is never touched. The introduction of $\#$ simply requires adding $\#$ to $\Gamma$ and to add a rules which pushes $\#$ to the stack, before any other rule is applied. In the following we assume that $\#$ is the stack bottom symbol whenever it appears in the text.} 
\full{\\ }
A \emph{configuration} of $\ABPDS$ is a tuple from $\Cnf_\ABPDS=P\times\Gamma^*$. A \emph{$c$-run $\rho$} of \ABPDS, where  $c$ is a configuration of $\ABPDS$, is a tree in which  each node is labelled by a configuration such that the root of the tree is labelled by $c$.  If a node labelled by $(p,w)$ has $n$ (direct) child nodes labelled by $(p_1,w_1),\ldots,(p_n,w_n)$, respectively,  then it is required that $(p,w)\Rightarrow_\ABPDS\{(p_1,w_1),\ldots,(p_n,w_n)\}$. We use $\Runs_\ABPDS(c)$ to denote the set of all $c$-runs\full{\footnote{{Sometimes, we assume that elements in $X$ in  a transition $(p,a)\Delta X$ are ordered and correspondently the branches in a run.}}} and $\Runs_\ABPDS=\bigcup_{c\in\Cnf_\ABPDS}\Runs_\ABPDS(c)$. We note that a run in a PDS \PDS is simply a linear sequence of configurations. A \emph{$\rho$-path},  $\rho\in\Runs_\ABPDS(c)$,  is a maximal length branch $\kappa=c_0c_1\ldots$ of $\rho$ starting at the root node $c$. We shall identify $\rho$ with its set of paths and write $\kappa\in\rho$ to indicate that $\kappa$ is a $\rho$-path. Again, in the case of a PDS \PDS a run and a path in it are essentially the same. 
We say that $\kappa\in\rho$ is \emph{accepting} if a state of $F$ occurs infinitely often in configurations on $\kappa$. A run is accepting if each path $\kappa\in\rho$  is accepting; and a configuration $c$ is accepting if there is an accepting run $\rho\in\Runs_\ABPDS(c)$. \full{We note that an accepting run of an ABPDS has only infinite branches. }The \emph{language} accepted by $\ABPDS$, $L(\ABPDS)$, is the set of all accepting configurations. \full{Finally, we define $\Rightarrow^*_\ABPDS\subseteq(P\times\Gamma^*)\times\powerset{P\times\Gamma^*}$ as, roughly speaking, the reflexive transitive closure of $\Rightarrow_\ABPDS$; that is, $c\Rightarrow^*_\ABPDS\{c\}$ for all $c$; if $c\Rightarrow_\ABPDS C$ then $c\Rightarrow^*_\ABPDS C$; and if $c\Rightarrow^*_\ABPDS \{c_1,\ldots,c_n\}$ and $c_i\Rightarrow^*_\ABPDS C_i$ for every $1\leq i\leq n$, then $c\Rightarrow^*_\ABPDS\bigcup_i C_i$. 
}

A nice property of an ABPDS is that its set of accepting configurations is regular, in the sense that it is accepted by an appropriate automaton which is defined next. An \emph{alternating automaton}~\cite{bouajjani1997reachability} is a tuple $\AAut=(S,\Sigma,\delta, I, S_f)$ where  $S$ is a finite, non-empty set of states,  $\delta\subseteq S\times \Sigma\times\powerset{S}$ is a transition relation, $\Sigma$ an input alphabet, $I\subseteq S$ a set of initial states, and $S_f\subseteq S$ a set of final states. \full{Similar to $\Rightarrow^*_\ABPDS$ we define the reflexive, transitive transition relation $\rightarrow^*_\AAut\subseteq (S\times\Sigma^*)\times \powerset{S}$ as follows (where we write $s\stackrel{w}{\rightarrow}^*_\AAut S'$ for $(s,w,S')\in\rightarrow^*_\AAut$): $s\stackrel{\epsilon}{\rightarrow}^*_\AAut\{s\}$, if $(s,a,S')\in\delta$ then $s\stackrel{a}{\rightarrow}^*_\AAut S'$, and if $s\stackrel{w}{\rightarrow}^*_\AAut\{s_1,\ldots,s_n\}$ and $s_i\stackrel{a}{\rightarrow}^*_\AAut S_i$ for $1\leq i\leq n$ then $s\stackrel{wa}{\rightarrow}^*_\AAut\bigcup_iS_i$.}  The automaton \emph{accepts} $(s,w)\in S\times\Sigma^*$ iff \full{$s\stackrel{w}{\rightarrow}^*_\AAut S'$  with $S'\subseteq S_f$ and} $s\in I$\short{ and each state reached after the automaton has read $w$ is a final state}. The  language accepted by $\AAut$ is denoted by $L(\AAut)$. A language is called \emph{regular} if it is accepted by an alternating automaton. \full{Finally, for a given  ABPDS $\ABPDS=(P,\Gamma,\Delta,F)$ we define an \emph{alternating \ABPDS-automaton}  as an alternating automaton $(S,\Sigma,\delta, I, S_f)$ such that $I\subseteq P\subseteq S$ and $\Gamma=\Sigma$.} We recall the following result from~\cite{song2014efficient}:

\vspace{-1ex}
\begin{theorem}[\cite{song2014efficient}]\label{theo:song1}
For any  ABPDS \ABPDS there is an effectively computable alternating $\ABPDS$-automaton $\AAut$ such that $L(\AAut)=L(\ABPDS)$.
\end{theorem}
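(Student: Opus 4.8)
The plan is to characterise the set $L(\ABPDS)$ of accepting configurations as a nested fixpoint over \emph{regular} sets of configurations, and then to realise each stage of that fixpoint by a saturation procedure on alternating automata, in the spirit of the backward-reachability technique of~\cite{bouajjani1997reachability}. First I would isolate the basic building block: for a regular set $C\subseteq\Cnf_\ABPDS$ presented by an alternating $\ABPDS$-automaton whose state set contains $P$, the alternating backward-reachability set $\Pre^*(C)=\{c : c\Rightarrow^*_\ABPDS C' \text{ for some } C'\subseteq C\}$ is again regular and effectively computable. This is obtained by saturation: for each rule $(p,a)\Delta\{(p_1,w_1),\ldots,(p_n,w_n)\}$ one repeatedly adds a transition letting the automaton read $a$ from state $p$ by branching (alternating) into the states reached after reading each $w_i$ from $p_i$. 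Crucially, \emph{no new states are introduced}, so only transitions from a fixed finite pool are added and the process terminates; correctness — that the saturated automaton accepts exactly $\Pre^*(C)$ — is proved by induction on the length of the derivation $c\Rightarrow^*_\ABPDS C'$ in one direction and on the order in which saturation transitions were added in the other.

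Next I would encode the B\"uchi condition as a greatest fixpoint over a guarded variant of this operator. A configuration is accepting iff it admits a run in which every branch visits $F$ infinitely often; equivalently, it can reach — in the alternating sense, on every branch, in at least one step — a configuration whose control state lies in $F$ and which is itself accepting. Writing $\Pre^{+}_{F}(Z)$ for the set of configurations admitting a run every branch of which hits, after $\geq 1$ step, some configuration in $Z$ with control state in $F$, I claim
\[ L(\ABPDS)=\nu Z.\,\Pre^{+}_{F}(Z). \]
The inclusion $\supseteq$ follows by concatenating the witnessing finite run-segments round by round into a single infinite run, each branch of which therefore meets $F$ infinitely often; the inclusion $\subseteq$ follows by slicing a given accepting run at the successive $F$-visits along each branch. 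The guarded set $\Pre^{+}_{F}(Z)$ is computed by the same saturation as above, augmented by duplicating each control state into an ``$F$ already seen'' and a ``not yet seen'' copy so as to track passage through an accepting state and to enforce the one-step guard; this duplication keeps the auxiliary state set bounded.

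Finally I would evaluate the greatest fixpoint by the descending chain $Z_0=\Cnf_\ABPDS$, $Z_{i+1}=\Pre^{+}_{F}(Z_i)$. The decisive point for effectiveness is that every $Z_i$ is represented by an alternating automaton over \emph{one and the same} finite state set: the initial automaton for $Z_0$ uses $P$ together with the fixed $F$-guard copies, and the saturation step never enlarges the state space. Since there are only finitely many transition relations over a fixed finite state set, the associated languages form a descending chain that stabilises after finitely many iterations, and the stationary automaton is the required $\AAut$ with $L(\AAut)=L(\ABPDS)$.

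The main obstacle I anticipate is twofold. One is pinning down the greatest-fixpoint characterisation of the infinitely-often condition \emph{exactly}, in particular the ``$\geq 1$ step'' guard, without which $\nu Z.\,\Pre^{+}_{F}(Z)$ would collapse to a mere reachability set. The other is arranging the $F$-guarded saturation so that its auxiliary state set remains bounded uniformly across all outer iterations, since that boundedness is precisely what guarantees termination of the outer fixpoint. Correspondingly, the correctness proof of the guarded saturation — which must simultaneously certify backward reachability and the traversal of an accepting control state along every branch — is where the inductive bookkeeping will be heaviest.
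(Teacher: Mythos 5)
The paper does not prove this theorem: it is imported verbatim from the cited reference, so there is no in-paper argument to measure yours against. Your reconstruction is, in substance, the proof from that source: characterise $L(\mathcal{B})$ as the greatest fixpoint $\nu Z.\,\mathsf{Pre}^{+}(Z\cap (F\times\Gamma^{*}))$ (your $\mathsf{Pre}^{+}_{F}$), prove the two inclusions by concatenating, respectively slicing, finite run segments at $F$-visits (K\"onig's lemma giving finiteness of each slice, since run trees are finitely branching), and realise each stage by Bouajjani--Esparza--Maler-style saturation on alternating automata. Two remarks. First, a minor infelicity: your ``$F$ already seen / not yet seen'' duplication computes ``every branch passes through $F$ somewhere in the segment,'' whereas your stated operator requires the segment's \emph{leaves} to lie in $Z$ with control state in $F$; both yield the same greatest fixpoint, but they are different operators and you should commit to one. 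Second, and more substantively, the step you leave schematic is precisely the one that makes the B\"uchi case harder than plain reachability: arranging the guarded saturation so that every iterate $Z_i$ is represented over \emph{one} fixed state set, including the re-anchoring of initial transitions that enforces the $\geq 1$-step-through-$F$ guard at each outer round. Your termination argument (finitely many transition relations over a fixed state set, hence a descending chain of languages with finitely many values, hence stabilisation at the greatest fixpoint) is valid \emph{given} that boundedness, but as written you assert it rather than establish it; in the cited work this is a dedicated construction and lemma showing that the saturated transitions are monotone across outer iterations while only a bounded ``anchor'' part is recomputed. You correctly flag this as the main obstacle, and any complete write-up must supply that bookkeeping; with that caveat the approach is sound and matches the source.
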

\full{The authors of~\cite{song2014efficient} do also determine the size of the automaton. As we are not concerned with the computational complexity in this paper, we omit these results.}

\subsection{Model Checking CTL over \full{PDSs}\short{ABPDSs}}\label{sec:CTL}
\full{\textbf{The Logic CTL.} Computation Tree Temporal Logic (\CTL)~\cite{Clarke81ctl} can be seen as the one agent, non-resource-constrained variant of $\ral$.   Formulae of the logic are defined by the grammar: $\phi::= \prop{p} \mid \neg\phi\mid \phi \wedge \phi \mid 
\Epath\Next \varphi\mid \Epath\Always \varphi \mid \Epath( \varphi \NUntil \psi)$ 
where $\prop{p}\in\Props$. 
 $\Epath$  denotes the existential  path quantifier. $\Epath\varphi$ expresses that there is a run on which $\varphi$ holds. The Boolean connectives are given by their usual abbreviations. In addition to that, we define the  macros  $\Sometm\varphi\equiv \true\NUntil\varphi$, $\Apath\Next\varphi\equiv\neg\Epath\Next\neg\varphi$, $\Apath\Always\varphi\equiv\neg\Epath\Sometm\neg\varphi$, and $\Apath\varphi\NUntil\psi\equiv \neg\Epath((\neg\psi)\NUntil(\neg\varphi\wedge\neg\psi))\wedge\neg\Always\neg\psi$. Thus, $\Apath\varphi$ is read as $\varphi$ holds on all runs. The other temporal operators  have  the same meaning as for \ral. Moreover, for our constructions it is  assumed that \CTL-formulae are in  \emph{negation normal form}, that is negation only occurs at the propositional level. This makes it necessary to allow  the connective $\vee$ (\emph{or}) and also the \emph{release operator} $\Release$ as first-class citizens in the object language. Therefore, we use the following macros: $\Apath\varphi_1\Release\varphi_2=\neg\Epath(\neg\varphi_1)\NUntil(\neg\varphi_2)$ and $\Epath\varphi_1\Release\varphi_2=\neg\Apath(\neg\varphi_1)\NUntil(\neg\varphi_2)$. A subformula of a formula $\varphi$ is a formula that occurs in $\varphi$, including $\varphi$ itself. The \emph{closure of $\varphi$}, $\cl(\varphi)$, is the set of all subformulae of $\varphi$. We  define the set  $\Props^+(\varphi)=\{\prop{p}\in\Props\mid \prop{p}\in\cl(\varphi)\}$ and $\Props^-(\varphi)=\{\prop{p}\in\Props\mid \neg\prop{p}\in\cl(\varphi)\}$ containing all propositional variables that occur positively and negatively in $\varphi$, respectively. Later, we also need a special closure $\cl_\Release(\varphi)$ which consists of all formulae of the form $\Apath(\varphi_1\Release\varphi_2)$ or $\Epath(\varphi_1\Release\varphi_2)$ of $\cl(\varphi)$.}

\full{\textbf{Model Checking over Pushdown Systems.}}
\short{We first consider model checking \emph{Computation Tree Logic} (\LogCTL) over PDSs. We assume that the reader is familiar with $\LogCTL$ and refer to~\cite{Clarke81ctl,Bulling15ABPDS-xarchive} for details. Essentially, the cooperation modalities of \ral are replaced by the existential and universal path quantifiers $\Epath$ and $\Apath$, respectively. The formula $\Epath\varphi$ expresses that there is a path along which $\varphi$ holds; analogously, $\Apath\varphi$ expresses that $\varphi$ holds along all paths. }
The problem of \CTL model checking over PDSs has been considered in, e.g.,  \cite{song2014efficient,bouajjani1997reachability,bozzelli2007complexity}. 
We now recall from~\cite{song2014efficient} how the problem is defined. First, the PDS is extended with a labelling function $\lab$ to give truth to propositional atoms. In~\cite{song2014efficient} two alternatives are considered.  The first alternative  assigns states to propositions, $\lab:\Props\rightarrow\powerset{P}$. The second alternative   assigns configurations to propositions, $\lab:\Props\rightarrow\powerset{P\times\Gamma^*}$. In the following we only consider the second, more general alternative as this is the one we shall need for  model checking \ral. For this type of labelling function we need a finite representation. We call $\lab$  \emph{regular} if there is an alternating automaton $\AAut_{\prop{p}}$ with $L(\AAut_{\prop{p}})=\lab(\prop{p})$ for each $\prop{p}\in\Props$. We are ready to give the semantics of \CTL-formulae over a  PDS $\PDS=(P,\Gamma,\Delta)$, $c\in\Cnf_\PDS$, and a regular labelling function $\lab:\Props\rightarrow\powerset{P\times\Gamma^*}$. \short{The semantic clauses are defined in the usual way, due to space contraints we present one clause only and refer to~\cite{Bulling15ABPDS-xarchive} for further details: 
$\PDS,c,\lab\models \Epath\Always\varphi$ iff there is a $c$-run $\rho=c_0c_1,\ldots\in\Runs_\PDS(c)$ such that $\PDS,c_i,\lab\models\varphi$ for all $i\geq 0$.
} 
\full{The semantics is defined by $\models$ as follows: 
\begin{description}
\item $\PDS,c,\lab\models \prop{p}$ iff $c\in\lab({\prop{p}})$:
\item $\PDS,c,\lab\models \neg\varphi$ iff it is not the case that $\PDS,c,\lab\models \varphi$:
\item $\PDS,c,\lab\models \varphi_1\wedge\varphi_2 $ iff it is the case that $\PDS,c,\lab\models \varphi_1$ and $\PDS,c,\lab\models \varphi_2$:
\item $\PDS,c,\lab\models \Epath\Next\varphi$ iff there is a $c$-run $\rho=c_0c_1,\ldots\in\Runs_\PDS(c)$ such that  $\PDS,c_1,\lab\models\varphi$:
\item $\PDS,c,\lab\models \Epath\Always\varphi$ iff there is a $c$-run $\rho=c_0c_1,\ldots\in\Runs_\PDS(c)$ such that $\PDS,c_i,\lab\models\varphi$ for all $i\geq 0$:
\item $\PDS,c,\lab\models \Epath\varphi_1\NUntil\varphi_2$ iff   there is a $c$-run $\rho=c_0c_1,\ldots\in\Runs_\PDS(c)$ such that  there is an $i\in\mathbb{N}_0$ with $\PDS,c_i,\lab\models\varphi_2$ and for all $0\leq j<i$ we have that $\PDS,c_j,\lab\models\varphi_1$.
\end{description}}

The authors of~\cite{song2014efficient} give a model checking algorithm which uses ABPDSs.  They construct from $\PDS$,  $\lab$ and $\varphi$, an ABPDS $\ABPDS_{\PDS,\varphi}$ such that $\PDS,(p,w),\lab\models\varphi$ iff $((p,\varphi),w)\in L(\ABPDS_{\PDS,\varphi})$. The ABPDS  is essentially the product of the PDS \PDS with the closure\short{ $\cl(\varphi)$} of $\varphi$\short{\footnote{The closure $\cl(\varphi)$ is the set of all subformulae of $\varphi$.}}, in particular states of $\ABPDS_{\PDS,\varphi}$ are tuples $(p,\psi)\in P\times\cl(\varphi)$.  The existential and universal path quantifiers of the formula cause the alternation of the ABPDS. \full{We will give more details in Section~\ref{sec:CTLABPDS} where we consider model checking \CTL-formulae over ABPDSs. We finish this section by recalling the following theorem which follows from Theorem~\ref{theo:song1}.

\begin{theorem}[\cite{song2014efficient}]\label{theorem:song2}
For a given PDS \PDS, a regular labelling function $\lab$, and a \CTL-formula $\varphi$ there is an effectively computable alternating automaton $\AAut_{\PDS,\varphi}$  such that for all configurations $c=(p,w)\in\Cnf_\PDS$  the following holds: $\PDS,c,\lab\models\varphi$ iff $((p,\varphi),w)\in L(\AAut_{\PDS,\varphi})$.
\end{theorem}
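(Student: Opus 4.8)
The plan is to follow the construction sketched above: from $\PDS$, $\lab$ and $\varphi$ I build a single ABPDS $\ABPDS_{\PDS,\varphi}$ whose accepting configurations are exactly the configurations of $\PDS$ that satisfy $\varphi$, and then apply Theorem~\ref{theo:song1} to turn $\ABPDS_{\PDS,\varphi}$ into the required alternating automaton $\AAut_{\PDS,\varphi}$. The control states of $\ABPDS_{\PDS,\varphi}$ are pairs $(p,\psi)\in P\times\cl(\varphi)$, together with the disjoint union of the states of the label automata $\AAut_{\prop{p}}$ and their duals, plus one self-looping accepting sink. The stack alphabet is $\Gamma$, and the configuration encoding the query $\PDS,(p,w),\lab\models\varphi$ is $((p,\varphi),w)$. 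The invariant to be established, by induction on $\cl(\varphi)$, is that $((p,\psi),w)$ is an accepting configuration of $\ABPDS_{\PDS,\varphi}$ iff $\PDS,(p,w),\lab\models\psi$; note that the alternation of $\ABPDS_{\PDS,\varphi}$ comes entirely from $\varphi$, since runs of the plain PDS are linear. W.l.o.g.\ I assume every configuration of $\PDS$ has a successor.

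The transition relation is defined by case analysis on $\psi$, mirroring the semantic clauses and unfolding the temporal operators by their fixpoint characterisations. For Booleans I set $((p,\psi_1\wedge\psi_2),a)\,\Delta\,\{((p,\psi_1),a),((p,\psi_2),a)\}$ (alternation: both conjuncts are checked) and add the two rules $((p,\psi_1\vee\psi_2),a)\,\Delta\,\{((p,\psi_i),a)\}$ for $i\in\{1,2\}$ (nondeterministic choice). For $\Epath\Next\psi_1$ I add, for each PDS rule $(p,a)\,\Delta_\PDS\,(p',w')$, a separate rule $((p,\Epath\Next\psi_1),a)\,\Delta\,\{((p',\psi_1),w')\}$, so that a run may pick any one successor; for $\Apath\Next\psi_1$ I use the single rule whose right-hand side is $\{((p',\psi_1),w')\mid (p,a)\,\Delta_\PDS\,(p',w')\}$, forcing all successors to be verified. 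The until and always/release operators are expanded using $\Epath(\psi_1\NUntil\psi_2)\equiv \psi_2\vee(\psi_1\wedge\Epath\Next\Epath(\psi_1\NUntil\psi_2))$, $\Epath\Always\psi_1\equiv \psi_1\wedge\Epath\Next\Epath\Always\psi_1$, and their universal counterparts with $\Apath\Next$ in place of $\Epath\Next$; each equivalence is turned into transition rules exactly as in the Boolean and next-step cases, so the operator's subformula reappears in a neighbouring control state. Atomic tests exploit the regular labelling: a rule from $(p,\prop{p})$ hands control to the initial state(s) of $\AAut_{\prop{p}}$, which reads the stack word $w$; on acceptance the branch moves to the accepting sink, on rejection it dies. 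Negated atoms $\neg\prop{p}$ (the only negations, as $\varphi$ is in negation normal form) are handled identically with the dual automaton $\overline{\AAut_{\prop{p}}}$, accepting the complement of $\lab(\prop{p})$.

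The B\"uchi accepting set $F$ is the crux, since it must separate the least-fixpoint operators (until/eventually), whose eventualities must actually be discharged, from the greatest-fixpoint operators (always/release), which may be satisfied by looping forever. I therefore place into $F$ the accepting sink and every control state $(p,\psi)$ with $\psi\in\cl_\Release(\varphi)$, while keeping every state carrying an until-formula out of $F$. Then an infinite branch unfolding an until-formula forever never visits $F$ and is rejecting, which is exactly what forces $\psi_2$ to be reached; an infinite branch unfolding an always/release formula visits $F$ at every step and is accepting. Recalling that an accepting run of an ABPDS has only infinite branches, the finite branches (a finished stack-reading computation, or a disjunct with no applicable rule) are handled precisely by either routing success to the self-looping sink in $F$ or leaving the branch with no successor so that it is rejecting.

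Correctness of the invariant is then proved by induction on the structure of $\psi$; for the temporal cases one uses the standard argument that an accepting run of $\ABPDS_{\PDS,\varphi}$ projects, via its PDS-component, onto a tree of $\PDS$-runs witnessing the corresponding \CTL quantification, and conversely that a witnessing family of $\PDS$-runs assembles into an accepting run of the product, with the acceptance condition of the previous paragraph making the two directions match for the fixpoint operators. Finally, applying Theorem~\ref{theo:song1} to $\ABPDS_{\PDS,\varphi}$ yields an effectively computable alternating automaton $\AAut_{\PDS,\varphi}$ with $L(\AAut_{\PDS,\varphi})=L(\ABPDS_{\PDS,\varphi})$, i.e.\ accepting exactly the accepting configurations; combining this with the invariant gives $\PDS,(p,w),\lab\models\varphi$ iff $((p,\varphi),w)\in L(\AAut_{\PDS,\varphi})$, as required. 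The main obstacle I expect is getting the interaction between the B\"uchi acceptance condition and the alternation right: one must ensure simultaneously that each until-eventuality is discharged on \emph{every} branch of a universally quantified subformula while greatest-fixpoint obligations are allowed to loop, and that the regular-label tests (including complementation for negated atoms) are spliced in without corrupting the acceptance condition. This is exactly where the negation-normal-form assumption and the clean isolation of $\cl_\Release(\varphi)$ pay off.
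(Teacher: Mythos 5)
Your construction is exactly the one the paper recalls from~\cite{song2014efficient} and spells out in Section~\ref{sec:CTLABPDS}: the product of the system with $\cl(\varphi)$, fixpoint unfolding of the temporal operators, hand-off to the label automata (and their complements) for literals with self-loops on the stack bottom, and the B\"uchi set $P\times\cl_\Release(\varphi)$ plus the label automata's final states, followed by an application of Theorem~\ref{theo:song1}. The proposal is correct and takes essentially the same route as the paper's proof (which is carried out in detail for the more general compact ABPDS case in Lemma~\ref{lemma:mcheck-CTL-CABPDS}).
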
}

\full{\subsection{Model Checking CTL over \short{Compact} ABPDS}\label{sec:CTLABPDS}}
For our later results, we need to be able to define the truth of \CTL-formulae over ABPDSs rather than PDSs. \full{We extend the result of Theorem~\ref{theorem:song2} accordingly.} Let an ABPDS \ABPDS be given. We first discuss what it means that $\ABPDS,c,\lab\models \Epath\varphi$. As before, we interpret it as:  there is a \emph{run} $\rho\in\Runs_\ABPDS(c)$ on which $\varphi$ holds. However, given that $\rho$ is a tree 
in the case of ABPDSs (or a set of paths) we need to explain how to evaluate $\varphi$ on trees. We require that $\varphi$ must be true on \emph{each} path $\kappa\in\rho$ on the run. This can nicely be illustrated if \ABPDS is considered as a two player game where player one decides which transition to take, and player two selects one of the child states. Thus, $\Epath\varphi$ expresses that player one has a strategy in the sense that it can enforce a run $\rho$ such that player two cannot make $\varphi$ false on any path $\kappa\in\rho$. \short{For a given configuration $c\in\Cnf_\ABPDS$, and a regular labelling $\lab$, the semantic rules have the following form~\cite{Bulling15ABPDS-xarchive}: 
$\ABPDS,c,\lab\models \Epath\Always\varphi$ iff there is a $c$-run $\rho\in\Runs_\ABPDS(c)$  such that for all paths $c_0c_1\ldots\in\rho$  it holds that $\ABPDS,c_i,\lab\models\varphi$ for all $i\geq 0$.
}\full{The precise semantics is given next where  $c\in\Cnf_\ABPDS$, and $\lab$ is a regular labelling function as before:

\begin{description}
\item $\ABPDS,c,\lab\models \prop{p}$ iff $c\in\lab(\prop{p})$;
\item $\ABPDS,c,\lab\models \neg\varphi$ iff it is not the case that $\ABPDS,c,\lab\models \varphi$;
\item $\ABPDS,c,\lab\models \varphi_1\wedge\varphi_2 $ iff  $\ABPDS,c,\lab\models \varphi_1$ and $\ABPDS,c,\lab\models \varphi_2$;
\item $\ABPDS,c,\lab\models \Epath\Next\varphi$ iff there is a $c$-run $\rho\in\Runs_\ABPDS(c)$ such that for all paths $c_0c_1\ldots\in \rho$ it holds that $\ABPDS,c_1,\lab\models\varphi$;
\item $\ABPDS,c,\lab\models \Epath\Always\varphi$ iff there is a $c$-run $\rho\in\Runs_\ABPDS(c)$  such that for all paths $c_0c_1\ldots\in\rho$  it holds that $\ABPDS,c_i,\lab\models\varphi$ for all $i\geq 0$;
\item $\ABPDS,c,\lab\models \Epath\varphi_1\NUntil\varphi_2$ iff   there is a $c$-run $\rho\in\Runs_\ABPDS(c)$ such that for all paths $c_0c_1\ldots\in\rho$   there is an $i\in\mathbb{N}_0$ with $\ABPDS,c_i,\lab\models\varphi_2$ and for all $0\leq j<i$ we have that $\ABPDS,c_j,\lab\models\varphi_1$.
\end{description}}

\full{\noindent{}We are ready to give a variant of Theorem~\ref{theorem:song2} over ABPDSs.\full{ It follow from  Lemma~\ref{lemma:mcheck-CTL-ABPDS} given below in combination with Theorem~\ref{theo:song1}.}
\begin{theorem}\label{theorem:CTL-ABPDS}
For a given ABPDS \ABPDS, a regular labelling function $\lab$, and a \CTL-formula $\varphi$ there is an effectively computable alternating automaton $\AAut_{\ABPDS,\varphi}$  such that for all configurations $c=(p,w)\in\Cnf_\ABPDS$ the following holds: $\ABPDS,c,\lab\models\varphi$ iff $((p,\varphi),w)\in L(\AAut_{\ABPDS,\varphi})$.

\end{theorem}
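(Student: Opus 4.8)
The plan is to reduce the model-checking problem to the acceptance problem of a single ABPDS and then invoke Theorem~\ref{theo:song1}. Concretely, I would construct a \emph{product} ABPDS $\ABPDS_{\ABPDS,\varphi}$ — mirroring the PDS construction behind Theorem~\ref{theorem:song2} — whose control states are pairs $(p,\psi)$ with $p$ a control state of \ABPDS and $\psi\in\cl(\varphi)$, together with a few auxiliary states used to decide the regular labelling, and then establish the key lemma (Lemma~\ref{lemma:mcheck-CTL-ABPDS}): $\ABPDS,(p,w),\lab\models\varphi$ iff $((p,\varphi),w)\in L(\ABPDS_{\ABPDS,\varphi})$. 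Given this lemma the theorem is immediate: since $\ABPDS_{\ABPDS,\varphi}$ is itself an ABPDS, Theorem~\ref{theo:song1} yields an effectively computable alternating automaton $\AAut_{\ABPDS,\varphi}$ with $L(\AAut_{\ABPDS,\varphi})=L(\ABPDS_{\ABPDS,\varphi})$, and chaining the two equivalences proves the claim.

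The transition rules of $\ABPDS_{\ABPDS,\varphi}$ unfold $\varphi$ according to the usual CTL fixpoint laws while simulating \ABPDS. For a proposition $\prop{p}$ (resp.\ $\neg\prop{p}$) the state $(p,\prop{p})$ pops the stack while simulating the labelling automaton $\AAut_{\prop{p}}$ (resp.\ its complement) to decide $(p,w)\in\lab(\prop{p})$, exactly as in~\cite{song2014efficient}; a conjunction branches conjunctively and a disjunction disjunctively, leaving the stack untouched. The interesting cases are the temporal operators. For $\Epath\Next\psi'$ the existential (disjunctive) choice of $\ABPDS_{\ABPDS,\varphi}$ selects a rule $(p,a)\Delta\{(p_1,w_1),\ldots,(p_n,w_n)\}$ of \ABPDS and spawns the conjunctive set $\{((p_i,\psi'),w_i)\}_i$; the operators $\Epath\Always$, $\Epath\,\Release$ and $\Epath\,\NUntil$ arise by additionally conjoining the ``now'' obligation and looping the temporal obligation along the chosen rule. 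The universal quantifier $\Apath$ is handled dually: instead of offering one rule as a disjunctive choice, I collapse \emph{all} rules applicable at $(p,a)$ into a single conjunctive successor set $\bigcup_{r}\{((p_i^r,\psi'),w_i^r)\}_i$, so that the obligation is imposed on every successor of every rule. Finally, following the standard B\"uchi encoding, I put exactly the states $(p,\psi)$ with $\psi\in\cl_\Release(\varphi)$ — the release/always obligations — together with the accepting sinks of the labelling check into the accepting set, so that an ``always/release'' obligation may be discharged by an infinite accepting loop whereas an ``until'' obligation must be fulfilled in finitely many steps.

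The correctness of the construction — the lemma — is the real work, and I would prove it by structural induction on $\varphi$, analysing accepting runs of $\ABPDS_{\ABPDS,\varphi}$. In one direction a witnessing run is read off as a $\ABPDS$-run (for $\Epath$) or used to argue about every $\ABPDS$-run (for $\Apath$) satisfying $\varphi$ on all of its paths, and conversely. The point that genuinely differs from the PDS setting of~\cite{song2014efficient}, and which I expect to be the main obstacle, is the interaction of the two sources of alternation: \ABPDS is \emph{itself} alternating, so each of its runs is already a tree and ``$\Epath\varphi$'' means choosing such a tree and checking $\varphi$ on \emph{all} of its branches. I must verify that resolving the existential path quantifier incrementally — one $\ABPDS$-rule per node — really does synthesise a single $\ABPDS$-run (this is sound because a run is exactly determined by a local rule choice at each node), and, more delicately, that the $\Apath$ collapse of all rules into one conjunctive set faithfully encodes ``for all runs, on all paths'', including the degenerate case of a configuration with no applicable rule.

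The remaining ingredient is the standard argument that the B\"uchi acceptance condition matches the least/greatest fixpoint character of the respective operators: an accepting path that forever carries a release/always state corresponds to a safety obligation that is met at every step, while the non-acceptance of until-states forbids postponing the eventuality indefinitely. Once the branching structure above is set up correctly, this part can largely be imported from~\cite{song2014efficient}, so that the only essentially new reasoning is the reconciliation of formula-alternation with the pre-existing alternation of the base system.
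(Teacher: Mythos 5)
Your proposal matches the paper's proof: the paper likewise builds the product ABPDS $\ABPDS_{\varphi}$ over $P\times\cl(\varphi)$ (with the labelling automata spliced in and $F'$ containing $P\times\cl_\Release(\varphi)$ plus their final states), resolves $\Epath$ by choosing one rule $(p,a)\Delta X$ and $\Apath$ by conjoining all such rules, proves the key lemma by structural induction, and then invokes Theorem~\ref{theo:song1}. The point you flag as the main novelty --- reconciling the formula-induced alternation with the alternation already present in \ABPDS --- is exactly the modification of rules 5--14 that the paper highlights, so your route is essentially identical.
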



The proof of the theorem is closely related to the proof of Theorem~\ref{theorem:song2} given in~\cite{song2014efficient}. Here, however, the branching in the resulting ABPDSs can have two different sources: branching can result from branching in the input ABPDS or from the universal \LogCTL-path quantifier. We sketch the construction of $\ABPDS_{\varphi}$ for a given  ABPDS   $\ABPDS=(P,\Gamma,\Delta,F)$, a regular labelling function $\lab:\Props\rightarrow\powerset{P\times\Gamma^*}$, and a \CTL-formula $\varphi$. 

First, for each $\prop{p}\in\Props$ let $\AAut_{\prop{p}}=(S_\prop{p},\Gamma,\delta_\prop{p},I_\prop{p},F_\prop{p})$ denote the alternating $\ABPDS$-automaton that accepts $L(\AAut_\prop{p})=\lab(\prop{p})$, and $\AAut_{\neg \prop{p}}$ be the alternating automaton with $L(\AAut_{\neg \prop{p}})=P\times\Gamma^*\backslash \lab(\prop{p})$. Due to technical reasons, we make the state spaces disjoint. We add $\prop{p}$ as subindex to every state of $S_\prop{p}$; for example,  a state $p$ becomes $p_\prop{p}$.  Note, that it then holds that $L(\AAut_\prop{p})\subseteq P_\prop{p}\times \Gamma^*$ rather than $L(\AAut_\prop{p})\subseteq P\times \Gamma^*$ where $P_\prop{p}$ is the set of renamed states of $P$. In particular,  the automaton $\ABPDS_{\varphi}$ will include states $(p,\prop{p})$  which are connected to an initial state of the form  $\AAut_{\prop{p}}$. This initial state of $\AAut_\prop{p}$ is denoted by $p_{\prop{p}}$; we proceed similarly for $\AAut_{\neg\prop{p}}$. The ABPDS  $\ABPDS_{\varphi}=(P',\Gamma,\Delta',F')$ is defined as follows (cf. Section~\ref{sec:CTL} for the notation used):
\begin{itemize}
\item $P'=(P\times\cl(\varphi))\cup\bigcup_{\prop{p}\in\Props^+(\varphi)}S_{\prop{p}}\cup\bigcup_{\prop{p}\in\Props^-(\varphi)}S_{\neg\prop{p}}$
\item $F'=(P\times\cl_\Release(\varphi))\cup\bigcup_{\prop{p}\in\Props^+(\varphi)}F_{\prop{p}}\cup\bigcup_{\prop{p}\in\Props^-(\varphi)}F_{\neg\prop{p}}$
\item $\Delta'\subseteq (P'\times\Gamma)\times\powerset{P'\times\Gamma^*}$ is the smallest relation such that:
\begin{enumerate}
\item $((p,\prop{p}),a)\Delta'(p_\prop{p},a)$ for $\prop{p}\in\Props$ 
\item $((p,\neg\prop{p}),a)\Delta'(p_{\neg\prop{p}},a)$  for $\prop{p}\in\Props$ 
\item $((p,\varphi\wedge\psi),a)\Delta'\{(p,\varphi),a),((p,\psi),a)\}$
\item $((p,\varphi\vee\psi),a)\Delta'\{(p,\xi),a)\}$ for $\xi\in\{\varphi,\psi\}$
\item $((p,\Epath\Next\varphi),a)\Delta'\{((p',\varphi),w') \mid (p',w') \in X\}$ for each $(p,a)\Delta X$
\item $((p,\Apath\Next\varphi),a)\Delta'\bigcup_{(p,a)\Delta X}\{((p',\varphi),w') \mid (p',w') \in X\}$
\item $((p,\Epath\varphi\NUntil\psi),a) \Delta'((p,\psi),a)$
\item $((p,\Epath\varphi\NUntil\psi),a)\Delta'\{((p,\varphi),a)\}\cup \{((p',\Epath\varphi\NUntil\psi),w')\mid (p',w') \in X\}$ for each $(p,a)\Delta X$
\item $((p,\Apath\varphi\NUntil\psi),a) \Delta'((p,\psi),a)$
\item $((p,\Apath\varphi\NUntil\psi),a)\Delta'\{((p,\varphi),a)\}\cup \bigcup_{(p,a)\Delta X}\{((p',\Apath\varphi\NUntil\psi),w')\mid (p',w') \in X\}$
\item $((p,\Epath\varphi\Release\psi),a) \Delta'((p,\varphi),a)$
\item $((p,\Epath\varphi\Release\psi),a)\Delta'\{((p,\psi),a)\}\cup\{((p',\Epath\varphi\Release\psi),w')\mid (p',w') \in X\}$ for each  $(p,a)\Delta X$
\item $((p,\Apath\varphi\Release\psi),a) \Delta'((p,\varphi),a)$
\item $((p,\Apath\varphi\Release\psi),a)\Delta'\{((p,\psi),a)\}\cup\bigcup_{(p,a)\Delta X}\{((p',\Apath\varphi\Release\psi),w')\mid (p',w') \in X\}$
\item If $(s,a,S')\in (\bigcup_{\prop{p}\in \Props^+(\varphi)}\delta_\prop{p}\cup \bigcup_{\prop{p}\in \Props^-(\varphi)}\delta_{\neg\prop{p}})$ then   $(s,a)\Delta'\{(s',\epsilon) \mid s' \in S'\}$
\item If $s\in(\bigcup_{\prop{p}\in \Props^+(\varphi)}F_\prop{p}\cup \bigcup_{\prop{p}\in \Props^-(\varphi)}F_{\neg\prop{p}})$ then   $(s,\#)\Delta'(s,\#)$
\end{enumerate}
\end{itemize}
Intuitively, $\ABPDS_{\varphi}$ simulates the semantics of $\CTL$ over ABPDSs and keeps track of the formulae to be satisfied. Let us consider rule 5 and suppose that the current configuration of $\ABPDS_\varphi$ is $((p,\Epath\Next\psi),wa)$. Then, $\ABPDS_\varphi$ selects one set $X$ (this models the existential quantifier $\Epath$) with $(p,a)\Delta X$ and sends a copy of $\ABPDS_\varphi$ to each of the successor state in $X$ (this models the temporal operator $\Next$). Rule 15 is responsible for simulating the alternating automaton at the propositional level and 16 ensures that the acceptance state of the alternating automaton also yields an accepting path of $\ABPDS_\varphi$. The final states include states of type $P\times\cl_\Release(\varphi)$ to ensure that formulae are accepted which are \emph{never released}. For further details of the standard functioning we refer to~\cite{song2014efficient}. We note that rules 5-14 had to be modified to work with ABPDSs. For a proof sketch of the next Lemma, we refer to the more general Lemma~\ref{lemma:mcheck-CTL-CABPDS}.


\begin{lemma}\label{lemma:mcheck-CTL-ABPDS}
Using the notation above, we have the following: 
$\ABPDS,(p,w),\lab\models\psi$ iff $((p,\psi),w)\in L(\ABPDS_{\varphi})$ for all $\psi \in \cl(\phi)$.
\end{lemma}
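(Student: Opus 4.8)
The plan is to prove the biconditional by \emph{structural induction} on $\psi\in\cl(\varphi)$, noting first that the single automaton $\ABPDS_{\varphi}$ handles all subformulae simultaneously through its states $P\times\cl(\varphi)$, so the induction takes place inside one fixed construction. For the non-fixpoint operators a direct simulation argument suffices, whereas the fixpoint operators $\NUntil$ and $\Release$ require a dedicated analysis of the Büchi acceptance condition. Throughout I would exploit the game reading of $\ABPDS_{\varphi}$ already indicated in the text: within a single rule $(p,a)\Delta'X$ the branching over $X$ is \emph{universal} (every successor must lead to acceptance), while a choice among several rules sharing the same left-hand side is \emph{existential}. The correspondence to establish is that player one's winning strategies in $\ABPDS_{\varphi}$ from $((p,\psi),w)$ match exactly the witnesses demanded by the semantics of $\psi$ at $(p,w)$, with the two players mirroring the existential/universal path quantifiers of \CTL on one side and the alternation already present in the input \ABPDS on the other.

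For the base case I would treat $\psi=\prop{p}$ (and dually $\psi=\neg\prop{p}$) by appealing to the correctness of the component automaton $\AAut_{\prop{p}}$ with $L(\AAut_{\prop{p}})=\lab(\prop{p})$: rule 1 hands control to the initial state $p_{\prop{p}}$, rule 15 pops the stack one symbol at a time while reproducing the transitions of $\AAut_{\prop{p}}$, and rule 16 installs a self-loop on the bottom symbol $\#$ exactly at the final states, so that the unique accepting (infinite, Büchi) behaviour of $\ABPDS_{\varphi}$ on this fragment coincides with $(p,w)\in\lab(\prop{p})$. The Boolean cases then follow immediately from the induction hypothesis: rule 3 realises $\wedge$ as a conjunctive (universal) branch and rule 4 realises $\vee$ as an existential choice of rule. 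For $\Epath\Next\varphi$ and $\Apath\Next\varphi$, rules 5 and 6 reduce acceptance, after a single step, to acceptance of $((p',\varphi),w')$ at each relevant successor $(p',w')$ of the input \ABPDS; matching these against the one-step semantics and applying the induction hypothesis to $\varphi$ closes these cases.

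The heart of the argument is $\NUntil$ and $\Release$, where the continuation rules 8, 10, 12, 14 carry the \emph{same} formula into the successor configuration, so structural induction alone does not terminate. Here I would invoke the calibration of the final set $F'$. Since until-formulae do \emph{not} lie in $\cl_\Release(\varphi)$, a branch that unfolds an until-formula forever (always taking the continuation rule, never the discharge rule 7 or 9) visits no accepting state and is rejected; dually, since release-formulae \emph{do} lie in $\cl_\Release(\varphi)$, an indefinite unfolding of a release-formula visits $F'$ at every step and is accepting. This is what makes the Büchi condition implement the least-fixpoint semantics of $\NUntil$ (the eventuality must be discharged on every branch in finite time, after which the induction hypothesis applies to $\psi$, while $\varphi$ is verified at each intermediate step through the conjunct $((p,\varphi),a)$) and the greatest-fixpoint semantics of $\Release$ (where maintaining $\psi$ forever is permitted). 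For each operator I would argue both directions: from a semantic witness I would build an accepting run by following the witness and discharging or maintaining the obligation as prescribed, and from an accepting run I would read off the discharge indices, finite by the Büchi condition in the $\NUntil$ case, and hence recover the required witness.

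I expect the main obstacle to be precisely this coupling of the Büchi condition with the two independent sources of branching. The run-to-witness translation must guarantee, on \emph{every} branch simultaneously, that until-obligations are discharged in finite time while the alternation inherited from the input \ABPDS is faithfully preserved; in particular the backward direction for $\Apath\,\varphi\,\NUntil\,\psi$ requires showing that no branch of an accepting run can postpone the eventuality indefinitely, even though rule 10 spawns fresh branches at every step both from the universal path quantifier and from the input transitions. Checking that $F'$ is tuned exactly right---release states in, until states out, alongside the propositional final states---is the delicate point on which the whole correctness argument turns.
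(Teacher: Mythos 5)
Your proposal is correct and follows essentially the same route as the paper: the paper proves this lemma by referring to the proof of the more general Lemma~\ref{lemma:mcheck-CTL-CABPDS}, which is a structural induction on $\psi$ establishing both directions, with the propositional base case handled via the component automata (rules 1, 15, 16), the Boolean and next-step cases via the corresponding rules, and the $\NUntil$ case via an inner induction on the distance to the discharge point (forward direction) and a run-to-witness translation reading off the discharge indices (backward direction). Your explicit emphasis on the calibration of $F'$ (release-states in, until-states out) as what forces eventualities to be discharged on every branch is exactly the mechanism the paper relies on, merely stated more explicitly than in its sketch.
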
}

%

\full{\subsection{Compact ABPDS}}
Our reduction of model checking \ral to an acceptance problem over ABPDSs relies on an encoding of an $1$-unbounded {\rbmi} as an ABPDS. Roughly speaking, the stack is used to keep track of the shared pool of resources. A technical difficulty is that an action may consume several resources at a time, whereas an ABPDS can only read the top stack symbol. Therefore, we introduce a more compact encoding of an ABPDS which allows to read (and pop) more than one stack symbol at a time. 

Given  a natural number $r\geq 1$, an \emph{$r$-compact} ABPDS (CABPDS) is a tuple $\CABPDS=(P,\Gamma,\Delta,F,r)$ where all ingredients have the same meaning as in an  ABPDS with the exception that $\Delta \subseteq P\times \Gamma^{\leq r} \times \powerset{P\times\Gamma^*}$ where $\Gamma^{\leq r} = \bigcup_{i=1}^r\Gamma^i$ denotes the set of all non-empty words over $\Gamma$ of length at most $r$. This models that the selection of the next transition can depend on up to the top $r$ stack symbols. All notions introduced so far are also used for CABPDSs. \full{Note that in a configuration $(p,a_1\ldots a_n)$ a transition $(p,b_1\ldots b_j)\Delta\{(p_1,w_1),\ldots,(p_m,w_m)\}$ can only be taken if, and only if, $n\geq j$ and $a_{n-j+1}\ldots a_n=b_1\ldots b_j$. In that case $(p,a_1\ldots a_n)\Rightarrow_\CABPDS\{(p_1,a_1\ldots a_{n-j} w_1),\ldots,\linebreak(p_m,a_1\ldots a_{n-j} w_m)\}$. 
Obviously,  a $1$-compact ABPDS is simply an ABPDS.} \full{\\ }
\full{\indent}A  CABPDS is no more expressive than a ``standard'' ABPDS.\short{ In~\cite{Bulling15ABPDS-xarchive} we show how to construct an ABPDS $\ABPDS(\CABPDS)$ from a given CABDS $\CABPDS$ such that  $c\in L(\CABPDS)$ iff $c\in L(\ABPDS(\CABPDS))$, for all configurations $c\in P\times\Gamma^*$. As for ABPDSs,  we can use CABPDSs as models for \LogCTL.} \full{Essentially, the  top $r$ stack symbols can be encoded in the states of an ABPDS. We make this intuition precise. Let $\CABPDS$ be the $r$-compact ABPDS given above. We define the  ABPDS $\ABPDS(\CABPDS)=(P',\Gamma,\Delta',F)$ consisting of the following elements:
\begin{itemize}
\item $P'=P\cup P\times \Gamma^{\leq r} \times \Gamma^{\leq r-1}$ where states in $P$ are called \emph{real states}, and all other states  \emph{storage states}. 
A storage state has the form $(p,w,v)$ and encodes that word $w$ should be popped from the stack where its prefix $v$ remains to be popped.
\item $\Delta'\subseteq P'\times\Gamma\times\powerset{P'\times\Gamma^*}$ is the smallest relation satisfying the following properties:
For all $p \in P, a \in \Gamma, w,v \in \Gamma^+$ and $X\subseteq P\times\Gamma^*$:
\begin{enumerate}
\item If $(p,a)\Delta X$ then $(p,a)\Delta' X$;
\item If $(p,wa)\Delta X$ then $(p,a)\Delta'((p,wa,w),\epsilon)$;
\item  $((p,vaw,va),a)\Delta' ((p,vaw,v),\epsilon)$ provided that $(p,vaw)\Delta X$;
\item  $((p,aw,a),a)\Delta'X$ provided that $(p,aw)\Delta X$.
\end{enumerate}
\end{itemize}
We briefly explain these  conditions. When a transition of $\CABPDS$  pops only a single symbol from the stack, 
it is also a transition in $\ABPDS(\CABPDS)$ (rule 1). If a transition of $\CABPDS$  pops more than one
symbol the transition is split into several in $\ABPDS(\CABPDS)$. The first symbol and the transition to a storage state is described by Condition 2: the top symbol $a$ is popped from the stack and the next state is $(p,wa,w)$. It expresses that $wa$ should be popped and that $w$ remains to be popped ($a$ has been popped by this very rule); note that $\epsilon$ in this transition means no symbol is pushed on the stack. Condition 3 describes how to pop a single symbol $a$ from a storage node where  $va$ is the word which remains to be popped in order to  complete the simulation of the transition $(p,vaw)\Delta{}X$. Finally, the last rule is applied if all but the last symbols $a$ of the  transition $(p,aw)\Delta{}X$ that is being simulated is read and can be popped from the stack. This completes the encoding and we  obtain  the following result:
\begin{proposition}\label{prop:ABPDS-and-CABPDS}
For any $r$-compact ABPDS $\CABPDS=(P,\Gamma,\Delta,F,r)$  we have that $c\in L(\CABPDS)$ iff $c\in L(\ABPDS(\CABPDS))$, for all configurations $c\in P\times\Gamma^*$. 
\end{proposition}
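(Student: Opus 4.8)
The plan is to establish a tight correspondence between the run-trees of $\CABPDS$ and those of $\ABPDS(\CABPDS)$ that preserves the B\"uchi acceptance condition, and then read off the equality of the accepted configurations. The guiding observation is that a single compact transition of $\CABPDS$ that pops a word $u=u_1\cdots u_j$ (with $u_j$ on top) and branches according to a set $X$ is simulated in $\ABPDS(\CABPDS)$ by a \emph{linear} chain of exactly $j$ steps: rule~2 pops the top symbol $u_j$ and enters the storage state $(p,u,u_1\cdots u_{j-1})$; rule~3 then pops $u_{j-1},\dots,u_2$ one symbol at a time through the storage states $(p,u,u_1\cdots u_i)$; and finally rule~4 pops $u_1$ and performs the branching into $X$. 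Branching therefore occurs only at the very last step of the chain, and the $j-1<r$ intermediate nodes are all storage states, none of which lies in $F$, since $F\subseteq P$ is disjoint from the storage-state space $P\times\Gamma^{\leq r}\times\Gamma^{\leq r-1}$. For $j=1$ the compact transition is reproduced verbatim by rule~1.

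For the ``only if'' direction, that is $c\in L(\CABPDS)$ implies $c\in L(\ABPDS(\CABPDS))$, I would take an accepting run $\rho$ of $\CABPDS$ on $c$ and expand it into a run $\rho'$ of $\ABPDS(\CABPDS)$ by replacing every edge of $\rho$ — an application of a compact transition popping some $u$ — with its associated chain as described above. The real configurations appearing in $\rho$ reappear, in the same branching pattern, as the endpoints of these chains in $\rho'$, separated only by finitely many intermediate storage configurations. Since a state of $F$ occurs infinitely often along a path of $\rho$ if and only if it occurs infinitely often along the corresponding path of $\rho'$ (the inserted storage states contribute nothing, and each chain has length at most $r$), $\rho'$ is accepting, whence $c\in L(\ABPDS(\CABPDS))$.

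For the converse, I would take an accepting run $\rho'$ of $\ABPDS(\CABPDS)$ on $c\in P\times\Gamma^*$ and contract it back to a run $\rho$ of $\CABPDS$. The key structural fact, which I would prove first, is that a simulation chain cannot be ``entered and abandoned'': from a storage configuration $(p,u,v)$ the only applicable rules are rule~3 (when $|v|\geq 2$) and rule~4 (when $|v|=1$), each with a uniquely determined, non-branching continuation up to the final branching of rule~4, and crucially each such step reads the current top stack symbol and requires it to match the next symbol of $u$ still to be popped. Hence a maximal storage segment either completes the chain for $u$ — in which case the top $j$ symbols of the stack must have been exactly $u_1\cdots u_j$, so that the compact transition $(p,u)\Delta X$ is genuinely applicable in $\CABPDS$ — or it fails to match and the branch gets stuck in a storage state, which is impossible in an accepting run because storage states are not in $F$ and an accepting run has only infinite branches. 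Contracting each completed chain to the corresponding single $\CABPDS$ edge, and noting that rule~4 reproduces precisely the children and pushed words dictated by $X$, yields a legal $\CABPDS$ run $\rho$; the same infinitely-often argument as above shows $\rho$ is accepting.

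The main obstacle is the converse direction, specifically verifying that an accepting run of $\ABPDS(\CABPDS)$ is forced to complete every storage segment correctly: one must rule out spurious behaviour where the machine commits (via rule~2) to popping $u$ although the stack does not in fact carry $u$ on top. This is exactly where the ``stuck in a non-accepting storage state'' argument is needed, and it relies on the fact that accepting ABPDS runs contain no finite branches. The remaining work is bookkeeping — checking the local transition correspondences for rules~1--4 and confirming that the expansion and contraction are well-defined operations on the whole run-tree, carried out by induction on the tree structure (or phrased as a simulation argument between the two systems).
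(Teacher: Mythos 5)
Your proposal is correct and follows essentially the same route as the paper: expand each compact transition into its chain of storage states for one direction, contract chains back for the other, and observe that the finitely many inserted storage states (which are not in $F$) do not affect which states occur infinitely often. In fact you spell out the converse direction more carefully than the paper, which only says it is done ``analogously'' via the inverse map; your observation that a mismatched chain would strand an accepting run in a non-final storage state is exactly the point that justifies that step.
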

\full{
\begin{proof}
In the following, we define a function $f$ which translates a run $\rho \in \Runs_\CABPDS$ into one of $\Runs_{\ABPDS(\CABPDS)}$.
Let $\rho = (p,w)(\rho_1,\ldots,\rho_n)$\footnote{We denote with $c(\rho_1,\ldots,\rho_n)$
a tree with root $c$ which has $n$ direct sub-tree $\rho_1,\ldots,\rho_n$ starting at the child nodes of $c$.} and 
$(p_i,w_i)$ be the root of $\rho_i$. This means $(p,w) \Rightarrow_\CABPDS \{(p_i,w_i)\mid 1 \leq i \leq n\}$.
Then, $f$ is defined
by induction on the structure of $\rho$ as follows:
\begin{itemize}
\item if $(p,w) \Rightarrow_\CABPDS \{(p_i,w_i)\mid 1 \leq i \leq n\}$ is generated by $(p,a) \Delta\{(p_i,w_i')\mid 1 \leq i \leq n\}$
for some $a\in \Gamma$, then $f(\rho) = (p,w)(f(\rho_1),\ldots,f(\rho_n))$; and
\item if $(p,w) \Rightarrow_\CABPDS \{(p_i,w_i)\mid 1 \leq i \leq n\}$ is generated by $(p,a_1\ldots a_m) \Delta\{(p_i,w_i')\mid 1 \leq i \leq n\}$
for some $m > 1$ and $a_1,\ldots,a_m \in \Gamma$ (i.e., $w = w'a_1\ldots a_m$), then 
$f(\rho) = (p,w)(((p,a_1\ldots a_m,a_1\ldots,a_{m-1}),w'a_1\ldots a_{m-1})(\ldots(((p,a_1\ldots a_m,\linebreak a_1),w'a_1)(f(\rho_1),\ldots,f(\rho_n)))\ldots))$.
\end{itemize} 
Furthermore, for any path $\kappa \in f(\rho)$, we can pinpoint, by abuse of notation, the corresponding path $f^{-1}(\kappa)$ in $\rho$ as
\begin{itemize}
\item $f^{-1}((p,w)(p_i,w_i)\ldots) = (p,w)f^{-1}((p_i,w_i)\ldots)$; and
\item $f^{-1}((p,w)((p,a_1\ldots a_m,a_1\ldots a_{m-1}),w'a_1\ldots a_{m-1})\ldots((p,a_1\ldots a_m,a_1),w'\linebreak a_1)(p_i,w_i)\ldots) = 
(p,w)f^{-1}((p_i,w_i)\ldots)$.
\end{itemize}
Obviously, any state occurring infinitely often in $f^{-1}(\kappa)$ also appears infinitely often in $\kappa$.

\noindent$(\Rightarrow):$ Assume that $c \in L(\CABPDS)$, then there exists an accepting run $\rho \in \Runs_\CABPDS(c)$.
Then $f(\rho)\in \Runs_{\ABPDS(\CABPDS)}(c)$. For each $\kappa \in f(\rho)$, $f^{-1}(\kappa)$ is accepting, i.e.,
some state in $F$ occurs infinitely often in $\kappa$; hence, it also occurs infinitely often in $\kappa$, showing that
$f(\rho)$ is accepting, i.e., $c \in L(\ABPDS(\CABPDS))$.

\noindent$(\Leftarrow):$ The proof is done analogously to the above case where the function $f^{-1}$ is used to translate
an accepting run $\rho \in \Runs_{\ABPDS(\CABPDS)}(c)$ into that of $\Runs_\CABPDS(c)$.\qed
\end{proof}}
The next corollary is easy to see: the language $L(\CABPDS)=L(\ABPDS(\CABPDS))$ is regular as  $L(\ABPDS(\CABPDS))$ and $P \times \Gamma^*$ are regular and regular languages are closed under intersection.
\begin{corollary}
$L(\CABPDS)$ is regular. 
\end{corollary}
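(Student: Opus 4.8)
The plan is to reduce the claim to regularity of the ordinary ABPDS $\ABPDS(\CABPDS)$ and then invoke Theorem~\ref{theo:song1}. The key observation is that Proposition~\ref{prop:ABPDS-and-CABPDS} only guarantees the equivalence $c\in L(\CABPDS)$ iff $c\in L(\ABPDS(\CABPDS))$ for configurations $c$ whose control state lies in $P$, whereas $\ABPDS(\CABPDS)$ carries the additional storage states $P\times\Gamma^{\leq r}\times\Gamma^{\leq r-1}$ and may therefore also accept configurations built from such states. Since $L(\CABPDS)\subseteq P\times\Gamma^*$ holds by definition, combining this with the proposition yields the identity $L(\CABPDS)=L(\ABPDS(\CABPDS))\cap(P\times\Gamma^*)$, which is the form I would work with.

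First I would apply Theorem~\ref{theo:song1} to the ABPDS $\ABPDS(\CABPDS)$ to obtain an effectively computable alternating $\ABPDS(\CABPDS)$-automaton $\AAut$ with $L(\AAut)=L(\ABPDS(\CABPDS))$; hence this language is regular. Next I would observe that $P\times\Gamma^*$ is trivially regular: a straightforward alternating automaton whose initial states range over all of $P$ and which accepts arbitrary stack content from each such state (looping on every symbol of $\Gamma$ into a final sink) accepts exactly the configurations with control state in $P$. Finally, because the languages accepted by alternating automata are closed under intersection, the identity above exhibits $L(\CABPDS)$ as an intersection of two regular languages, and is therefore regular.

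There is no genuine obstacle here, as the real work was already carried out in Proposition~\ref{prop:ABPDS-and-CABPDS}. The only point that requires care is the restriction to $P\times\Gamma^*$ in the intersection: it is needed precisely to discard the auxiliary storage-state configurations that $\ABPDS(\CABPDS)$ introduces but that are not configurations of $\CABPDS$, so that the equivalence supplied by the proposition applies to every configuration in the resulting set.
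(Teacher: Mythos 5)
Your argument is correct and follows essentially the same route as the paper, which likewise justifies the corollary by noting that $L(\ABPDS(\CABPDS))$ and $P\times\Gamma^*$ are regular and that regular languages are closed under intersection, with Proposition~\ref{prop:ABPDS-and-CABPDS} supplying the needed identity. Your explicit remark that the intersection with $P\times\Gamma^*$ serves to discard configurations over the auxiliary storage states is a welcome clarification of a point the paper leaves implicit, but it is not a different proof.
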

\vspace{-0.4cm}


\subsection{CTL Model checking over compact ABPDSs}

In this section, we consider  model checking \CTL over CABPDSs. Given an $r$-compact ABPDS $\CABPDS=(P,\Gamma,\Delta,F,r)$, a regular labelling function 
$\lab : \Pi \to 2^{P\times\Gamma^*}$ and a \CTL-formula $\varphi$ over $\Props$.
Assume that for each $a \in \Pi^+(\phi)$, $\lab(a)$ is accepted by an alternating automaton
$\AAut_\prop{p} = (S_\prop{p},\Gamma,\delta_\prop{p}, I_\prop{p},F_\prop{p})$; and for each $\prop{p} \in \Pi^-(\phi)$, 
the complement $P \times \Gamma^* \setminus \lab(\prop{p})$ is accepted by an alternating automaton
$\AAut_{\neg \prop{p}} = (S_{\neg \prop{p}},\Gamma,\delta_{\neg \prop{p}}, I_{\neg \prop{p}},F_{\neg \prop{p}})$. We assume the same notational conventions wrt. disjointness and renaming as discussed in Section~\ref{sec:CTLABPDS}. We define the $r$-compact ABPDS $\CABPDS_{\phi}=(P',\Gamma,\Delta',F',r)$ as follows:
\begin{itemize}
\item $P' = (P \times \cl(\phi)) \cup 
            \bigcup_{\prop{p} \in \Pi^+(\phi)} S_{\prop{p}} \cup 
			\bigcup_{{\prop{p}} \in \Pi^-(\phi)} S_{\neg \prop{p}}$;
\item $F' = (P \times \cl_\Release(\phi)) \cup \bigcup_{\prop{p} \in \Pi^+(\phi)} F_{ \prop{p}} \cup 
			\bigcup_{{\prop{p}} \in \Pi^-(\phi)} F_{\neg \prop{p}}$; and
\item $\Delta'$ is the smallest relation satisfying rules 1-14, 16  given in Section~\ref{sec:CTLABPDS} where symbol $a$ is replaced by word $w$ everywhere, and rule 15 of Section~\ref{sec:CTLABPDS} is taken without any changes.
\end{itemize}

The intuition of the construction of 
$\CABPDS_{\phi}$ is the same as for $\ABPDS_\varphi$. We obtain the result:

\begin{lemma}
\label{lemma:mcheck-CTL-CABPDS}
Using the notation above, we have the following: 
$\CABPDS,(p,w),\lab \models \psi$ iff $((p,\psi),w) \in L(\CABPDS_{\phi})$
for all $\psi \in \cl(\phi)$.
\end{lemma}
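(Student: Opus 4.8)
The plan is to prove the biconditional by relating accepting runs of $\CABPDS_\phi$ to the semantics of \CTL over the underlying CABPDS $\CABPDS$, proceeding by induction on the structure of the formula $\psi \in \cl(\phi)$. The construction of $\CABPDS_\phi$ is the compact analogue of $\ABPDS_\varphi$ from Section~\ref{sec:CTLABPDS}, so the proof follows the same pattern as the sketch of Lemma~\ref{lemma:mcheck-CTL-ABPDS}; the only genuine modification is that transitions now read words $w \in \Gamma^{\leq r}$ rather than single symbols, and I must check that this word-reading does not disturb the correspondence. Concretely, I would show that an accepting run of $\CABPDS_\phi$ started in configuration $((p,\psi),w)$ witnesses $\CABPDS,(p,w),\lab \models \psi$, and conversely that the \CTL-semantics lets me assemble such an accepting run.

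First I would set up the base cases. For a propositional atom $\prop{p}$, rule 1 sends $((p,\prop{p}),w)$ to $(p_\prop{p},w)$, handing control to the embedded alternating automaton $\AAut_\prop{p}$; by rule 15 (unchanged, reading a single symbol) the automaton consumes the stack one symbol at a time, and rule 16 makes a final state loop on $\#$ so that it becomes accepting exactly when $\AAut_\prop{p}$ accepts $(p,w)$, i.e.\ when $(p,w) \in \lab(\prop{p})$. The negated atom case is symmetric via $\AAut_{\neg\prop{p}}$. The Boolean cases (rules 3,4) are immediate from the induction hypothesis: conjunction spawns two copies, both of which must have accepting subruns, while disjunction chooses one.

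For the inductive step on temporal operators I would treat the existential and universal quantifiers together, exploiting that $\Epath$-rules select a single successor set $X$ with $(p,w)\Delta X$ while $\Apath$-rules take the union over all such $X$, so that the alternation of $\CABPDS_\phi$ faithfully mirrors the player-one/player-two game reading of the \CTL-semantics over ABPDSs given earlier. The key observation making the compact setting work is that each rule replaces the read word $w$ by $w'$ from the matching transition, so a run of $\CABPDS_\phi$ projects, by forgetting the formula component of the states, precisely onto a run of $\CABPDS$; the formula components record which subformula each branch is currently obliged to satisfy. For $\Epath\Next$ and $\Apath\Next$ (rules 5,6) one step suffices and the result is immediate from the induction hypothesis applied to the successors. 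For $\Epath\varphi\NUntil\psi$ and its universal counterpart (rules 7--10) the fixed-point unfolding is handled by the choice between discharging the obligation now (moving to $(p,\psi)$) or deferring it while remembering $\varphi$; the B\"uchi acceptance condition rules out paths that defer forever, since $P \times \cl_\Release(\phi)$ — which does \emph{not} include Until-formulae — is the accepting set, so an Until-state visited infinitely often yields a non-accepting path and is correctly rejected. Dually, for the Release operators (rules 11--14) the same states are accepting, so a path that forever postpones without violating $\psi$ \emph{is} accepting, capturing that Release need never be discharged.

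The main obstacle I expect is the Until/Release acceptance argument in the compact setting, and in particular verifying that the word-reading does not create spurious accepting paths. I would argue the correspondence carefully: since every rule 1--14 either keeps the stack head unchanged or pushes a $w'$ coming from a genuine $\CABPDS$-transition, the projection of any $\CABPDS_\phi$-path to its second component is a legitimate $\CABPDS$-path, and conversely any $\CABPDS$-run can be decorated with formula components following the rules. Given this bijection-up-to-projection, the B\"uchi condition on $\CABPDS_\phi$ translates exactly into the eventuality constraint of $\NUntil$ and the safety constraint of $\Release$, which completes the induction. Rather than reprove the standard machinery I would lean on the fact that rules 1--14 and 16 are obtained from those of Section~\ref{sec:CTLABPDS} purely by the substitution of $w$ for $a$, so the correctness argument of Lemma~\ref{lemma:mcheck-CTL-ABPDS} transfers verbatim once the projection observation is in place.
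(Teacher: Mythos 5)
Your proposal is correct and follows essentially the same route as the paper: a structural induction on $\psi$ in both directions, with the base cases handled by the embedded alternating automata via rules 1, 15 and 16, the Until/Release cases resolved by the B\"uchi acceptance set $P\times\cl_\Release(\phi)$, and the correspondence between runs of $\CABPDS_\phi$ and runs of $\CABPDS$ established by (in the paper's notation) the collapsing map $g$, which is exactly your ``projection by forgetting the formula component'' once the stuttering steps of rules 3, 4, 7, 9, 11 and 13 are contracted. The paper likewise treats the compact word-reading as a transparent substitution of $w$ for $a$, so no further comparison is needed.
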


\full{\begin{proof}[Sketch]
$(\Rightarrow):$ Assume that $\CABPDS,(p,w),\lab \models \psi$. We prove that
$((p,\psi),w)$ has an accepting run in $\CABPDS_{\phi}$ by induction on the
structure of $\psi$.

\inductioncase{Case $\psi = \prop{p}$:} Since $\CABPDS,(p,w),\lab \models \psi$, $(p,w) \in \lab(\prop{p})$.
Then, in $\AAut_\prop{p}$ we have $p_\prop{p} \stackrel{w}{\rightarrow}_{\AAut_\prop{p}}^* S'$ where $p_\prop{p} \in I_\prop{p}$ and  $S' \subseteq F_\prop{p}\subseteq F'$. 
Then, $((p,\prop{p}),w) \Rightarrow_{\CABPDS_\phi} (p_\prop{p},w) \Rightarrow^*_{\CABPDS_\phi} \{ (s',\#) \mid s' \in S'\} \Rightarrow_{\CABPDS_\phi} 
\{ (s',\#) \mid s' \in S'\} \Rightarrow_{\CABPDS_\phi} \ldots$ (by rules 1, 15 and 16) which is an accepting run in $\Runs_{\CABPDS_{\phi}}$.

\inductioncase{Case $\psi = \psi_1\lor\psi_2$:}
Since $\CABPDS,(p,w),\lab \models \psi_1\lor\psi_2$, 
$\CABPDS,(p,w),\lab \models \psi_1$ or 
$\CABPDS,(p,w),\lab \models \psi_2$. 
Without loss of generality, let us assume that $\CABPDS,(p,w),\lab \models \psi_1$.
By induction hypothesis, we have
$((p,\psi_1),w)\in L(\CABPDS_\varphi)$, i.e., there is an accepting $((p,\psi_1),w)$-run $\rho$.
Then, we construct a $((p,\psi_1\lor\psi_2),w)$-run as $((p,\psi_1\lor\psi_2),w)(\rho)$ which
is obviously accepting. Hence, $((p,\psi_1\lor\psi_2),w)\in L(\CABPDS_\varphi)$.

\inductioncase{Case $\psi = \psi_1\land\psi_2$:}
Since $\CABPDS,(p,w),\lab \models \psi_1\land\psi_2$, 
$\CABPDS,(p,w),\lab \models \psi_1$ and
$\CABPDS,(p,w),\lab \linebreak \models \psi_2$. 
By induction hypothesis, we have
$((p,\psi_1),w)\in L(\CABPDS_\varphi)$ and $((p,\psi_2),w)\in L(\CABPDS_\varphi)$, 
i.e., there exist a $((p,\psi_1),w)$-run $\rho_1$
and a $((p,\psi_2),w)$-run $\rho_2$ which are both accepting.
Then, we construct a $((p,\psi_1\land\psi_2),w)$-run as $((p,\psi_1\land\psi_2),w)(\rho_1,\rho_2)$ which
is obviously accepting. Hence, $((p,\psi_1\land\psi_2),w)\in L(\CABPDS_\varphi)$.

\inductioncase{Case $\psi = \Epath\Next\psi_1$:}
Since $\CABPDS,(p,w),\lab \models \psi$, there exists a $(p,w)$-run $\rho = (p,w)(\rho_1,\ldots,\linebreak\rho_n)$\footnote{Recall
that $c(\rho_1,\ldots,\rho_n)$ denotes a tree which is rooted at $c$ and has $n$ direct sub-trees $\rho_1,\ldots,\rho_n$.}
for all roots $(p_i,w_i)$ of $\rho_i$,  
$\CABPDS,(p_i,w_i),$ $\lab \models \psi_1$.
By induction hypothesis, for all $1 \leq i \leq n$, we have $((p_i,\psi_1),w_i) \in L(\CABPDS_{\phi})$, i.e.,
there exists an accepting $((p_i,\psi_1),w_i)$-run $\rho_i'$.
Then, we construct a $((p,\Epath\Next\psi_1),w)$-run as $((p,\Epath\Next\psi_1),w)(\rho_1',\linebreak\ldots,\rho_n')$ which
is obviously accepting. Hence, $((p,\Epath\Next\psi_1),w)\in L(\CABPDS_\varphi)$.

\inductioncase{Case $\psi = \Epath\psi_1\NUntil\psi_2$:}
Since $\CABPDS,(p,w),\lab \models \psi$, there exists a $(p,w)$-run $\rho$ such that,
for all paths $\kappa = (p^\kappa_0,w^\kappa_0) (p^\kappa_1,w^\kappa_1) \ldots \in \rho \in \Runs_\CABPDS((p,w))$, 
$\exists i_\kappa \geq 0$ such that $\CABPDS,(p^\kappa_{i_\kappa},w^\kappa_{i_\kappa}),$ $\lab \models \psi_2$ and
$\forall 0 \leq j < i_\kappa$ we have that $\CABPDS,(p^\kappa_j,w^\kappa_j),\lab \models \psi_1$. 
By induction hypothesis, we have $((p^\kappa_{i_\kappa},\psi_2),w^\kappa_{i_\kappa}) \in L(\CABPDS_{\phi})$ and
$((p^\kappa_j,\psi_1),w^\kappa_j) \in L(\CABPDS_{\phi})$ for all $0 \leq j < i_\kappa$.
Now, we show that $((p^\kappa_i,\psi),w^\kappa_i) \in L(\CABPDS_\phi)$ for all
$\kappa\in\rho$ and $0 \leq i \leq i_\kappa$ by induction on $i_\kappa-i$ (the claim follows for $i=0$, then we have $((p,\psi),w) \in L(\CABPDS_\phi)$):
\begin{description}
\item [Base case: ] Assume that $i = i_\kappa$, then $((p^\kappa_{i_\kappa}),\psi),w^\kappa_{i_\kappa}) \Rightarrow_{\CABPDS_\phi} ((p^\kappa_{i_\kappa},\psi_2),w^\kappa_{i_\kappa})$ by rule 7. Since 
$((p^\kappa_{i_\kappa},\psi_2),w^\kappa_{i_\kappa}) \in L(\CABPDS_{\phi})$, we have that 
$((p^\kappa_{i_\kappa}),\psi),w^\kappa_{i_\kappa})\in L(\CABPDS_{\phi})$.
\item [Induction step: ] 
Assume that $((p^\kappa_i,\psi),w^\kappa_i) \in L(\CABPDS_\phi)$ for all
$\kappa\in\rho$ and $0 < i \leq i_\kappa$. Consider an arbitrary
$\kappa\in\rho$ and $((p^\kappa_{i-1},\psi),w^\kappa_{i-1})$.
Then, there exists a transition $(p^\kappa_{i-1},w^\kappa_{i-1}) \Delta X$ such that
$X = \{ (p^{\kappa'}_{i},w^{\kappa'}_{i}) \mid \kappa' \in \rho, 
(p^{\kappa'}_{i-1},w^{\kappa'}_{i-1})=(p^\kappa_{i-1},w^\kappa_{i-1}) \}$;
i.e., the transition taken at $(p^\kappa_{i-1},w^\kappa_{i-1})$ in $\rho$.
Then, by induction hypothesis, 
$((p^{\kappa'}_{i},\psi),w^{\kappa'}_{i}) \in L(\CABPDS_\phi)$
for all $\kappa' \in \rho$ with
$(p^{\kappa'}_{i-1},w^{\kappa'}_{i-1})=(p^\kappa_{i-1},$ $w^\kappa_{i-1})$;
i.e., $((p',\psi),w') \in L(\CABPDS_\phi)$ for all $(p',w') \in X$.
Moreover, we have that (i)
$((p^\kappa_{i-1},\psi),w^\kappa_{i-1}) \Rightarrow_{\CABPDS_\varphi}
\{ ((p^\kappa_{i-1},\psi_1),w^\kappa_{i-1}) \}
\cup 
\{ ((p',\psi),w') \mid (p',w') \in X \}$ and
(ii) $((p^\kappa_{i-1},\psi_1),w^\kappa_{i-1}) \in L(\CABPDS_\phi)$.
Therefore, $((p^\kappa_{i-1},\psi),w^\kappa_{i-1}) \in L(\CABPDS_\phi)$.
\end{description}
\noindent For the other cases of $\psi$, the proofs are similar.

\noindent$(\Rightarrow):$ Assume that $((p,\psi),w) \in L(\CABPDS_{\phi})$, then we prove that $\CABPDS,(p,w),\lab \models \psi$ by induction on the structure of $\psi$.

\inductioncase{Case $\psi = \prop{p}$:} Since $((p,\prop{p}),w) \in L(\CABPDS_{\phi})$, there exists an accepting 
$(p,\prop{p}),w)$-run $\rho$. Furthermore, the prefix of $\rho$ must  satisfy the following:
$((p,\prop{p}),w) \Rightarrow_{\CABPDS_\varphi} (p_\prop{p},w) \Rightarrow_{\CABPDS_\varphi}^* (f,\#)$ for some $f \in F_\prop{p}$.
Thus, $\AAut_\prop{p}$ has a run $p_\prop{p} \stackrel{w}{\rightarrow}^*_{\AAut_\prop{p}} f$, i.e., $(p,w) \in L(\AAut_\prop{p}) = \lab(\prop{p})$.
Hence, $\CABPDS,(p,w),\lab \models \prop{p}$

\inductioncase{Case $\psi = \psi_1\lor\psi_2$:}
Since $((p,\psi_1\lor\psi_2),w) \in L(\CABPDS_{\phi})$, there exists an accepting 
$((p,\psi_1\lor\psi_2),w)$-run $\rho$. Furthermore, $\rho$ must have the form
$((p,\psi_1\lor\psi_2),w)(\rho')$ where $\rho'$ is rooted at either
$((p,\psi_1),w)$ or $((p,\psi_2),w)$.  
Without loss of generality, let us assume $((p,\psi_1),w)$ is the root of $\rho'$;
then $\rho'$ is an accepting run, i.e., $((p,\psi_1),w) \in L(\CABPDS_{\phi})$
By induction hypothesis, we have $\CABPDS,(p,w),\lab \models \psi_1$; thus,
$\CABPDS,(p,w),\lab \models \psi_1 \lor \psi_2$.

\inductioncase{Case $\psi = \psi_1\land\psi_2$:}
Since $((p,\psi_1\land\psi_2),w) \in L(\CABPDS_{\phi})$, there exists an accepting 
$((p,\psi_1\land\psi_2),w)$-run $\rho$. Furthermore, $\rho$ must have the form
$((p,\psi_1\land\psi_2),w)(\rho_1,\rho_2)$ where $\rho_i$ is rooted at 
$((p,\psi_i),w)$.  
Then, $\rho_1$ and $\rho_2$ are both accepting, i.e., $((p,\psi_1),w) \in L(\CABPDS_{\phi})$
and $((p,\psi_2),w) \in L(\CABPDS_{\phi})$.
By induction hypothesis, we have $\CABPDS,(p,w),\lab \models \psi_1$ and
$\CABPDS,(p,w),\lab \models \psi_2$; thus,
$\CABPDS,(p,w),\lab \models \psi_1 \land \psi_2$.

\inductioncase{Case $\psi = \Epath\Next\psi_1$:}
Since $((p,\Epath\Next\psi_1),w) \in L(\CABPDS_{\phi})$, there exists an accepting 
$((p,\Epath\Next\psi_1),\linebreak w)$-run $\rho$. Furthermore, $\rho$ must have the form
$((p,\Epath\Next\psi_1),w)(\rho_1,\ldots,\rho_n)$ for some
$(p,a_1\ldots a_m) \Delta \{(p_1,w_1),\ldots,(p_n,w_n)\}$
where  $w=w'a_1\ldots a_m$ 
and 
$\rho_i$ is rooted at 
$((p_i,\psi_1),w'w_i')$.  
Then, for all $1\leq i \leq n$, $\rho_i$ is accepting, i.e., $((p_i,\psi_1),w'w_i') \in L(\CABPDS_{\phi})$.
By induction hypothesis, we have $\CABPDS,(p_i,w'w_i'),\lab \models \psi_1$; thus,
$\CABPDS,(p,w),\lab \models \Epath\Next\psi_1$.

\inductioncase{Case $\psi = \Epath\psi_1\NUntil\psi_2$:}
Since $((p,\Epath\psi_1\NUntil\psi_2),w) \in L(\CABPDS_{\phi})$, there exists an accepting 
$((p,\Epath\psi_1\NUntil\psi_2),w)$-run $\rho$. We convert $\rho$ into a prefix $g(\rho)$ of some run in $\CABPDS$
by induction on the structure of $\rho$ as follows:
\begin{itemize}
\item $g(((p,\Epath\psi_1\NUntil\psi_2),w)(\rho')) = (p,w)$ where $\rho'$ is the only direct sub-tree of the root of $\rho$ according to Rule 7 given in Section~\ref{sec:CTLABPDS}; and
\item $g(((p,\Epath\psi_1\NUntil\psi_2),w)(\rho',\rho_1,\ldots,\rho_n)) = (p,w)(g(\rho_1),\ldots,g(\rho_n))$
for some $((p,\linebreak\Epath\psi_1\NUntil\psi_2),w)\Rightarrow_{\CABPDS_\varphi} \{((p,\psi_1),w)\} \cup 
\{((p,\Epath\psi_1\NUntil\psi_2),w') \mid (p',w') \in X\}$ where $(p,w)\Rightarrow_\CABPDS X$ according to Rule 8 given in Section~\ref{sec:CTLABPDS}.
\end{itemize}
Then, every path $\kappa = (p_0,w_0)\ldots(p_m,w_m) \in g(\rho)$ (for some $m \geq 0$, $p_0 = p$, and $w_0=w$) corresponds to a prefix of a path in $\rho$ 
which has the form
$((p_0,\Epath\psi_1\NUntil\psi_2),w_0)\linebreak\ldots((p_m,\Epath\psi_1\NUntil\psi_2),w_m)((p_m,\psi_2),w_m)$ for some
$m \geq 0$. Furthermore, for all $i < m$, $((p_i,\psi_1),w_i)$ is the direct child of 
$((p_i,\Epath\psi_1\NUntil\psi_2),w_i)$. 
Then, for all $i < m$, $((p_i,\psi_1),\linebreak w_i) \in L(\CABPDS_{\phi})$ and $((p_m,\psi_2),w_m) \in L(\CABPDS_{\phi})$.
By induction hypothesis, we have $\CABPDS,(p_i,w_i\linebreak),\lab \models \psi_1$ for all $i < m$ and
$\CABPDS,(p_m,w_m),\lab \models \psi_2$; thus
$\CABPDS,(p,w),\lab \models \Epath\psi_1\NUntil \psi_2$.

\noindent \\
For the other cases of $\psi$, the proofs are similar.
\qed

\end{proof}}
The following theorem follows from this Lemma~\ref{lemma:mcheck-CTL-CABPDS}, Proposition~\ref{prop:ABPDS-and-CABPDS}, and  Theorem~\ref{theo:song1}.}
\short{We obtain the following result:}
\begin{theorem}\label{theorem:CTL-CABPDS}
For a given CABPDS \CABPDS, a regular labelling function $\lab$, and a \CTL-formula $\varphi$ there is an effectively computable alternating automaton $\AAut_{\CABPDS,\varphi}$  such that for all configurations $c=(p,w)\in\Cnf_\CABPDS$ the following holds: $\CABPDS,c,\lab\models\varphi$ iff $((p,\varphi),w)\in L(\AAut_{\CABPDS,\varphi})$.

\end{theorem}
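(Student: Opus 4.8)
The plan is to obtain the theorem with no new machinery, simply by composing the three results that immediately precede it and chaining their biconditionals, so that the truth of $\varphi$ in $\CABPDS$ is translated step by step into an acceptance question for an alternating automaton. First I would take the $r$-compact ABPDS $\CABPDS_{\varphi}$ constructed above from $\CABPDS$, the regular labelling $\lab$, and $\varphi$. Lemma~\ref{lemma:mcheck-CTL-CABPDS} already supplies the central semantic bridge: for every configuration $c=(p,w)\in\Cnf_\CABPDS$, applied to $\psi=\varphi\in\cl(\varphi)$, it gives $\CABPDS,(p,w),\lab\models\varphi$ iff $((p,\varphi),w)\in L(\CABPDS_{\varphi})$. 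Thus the model-checking problem is reduced to an acceptance problem for the compact product automaton.

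Next I would remove the compactness. Applying Proposition~\ref{prop:ABPDS-and-CABPDS} to $\CABPDS_{\varphi}$ yields an ordinary ABPDS $\ABPDS(\CABPDS_{\varphi})$ with $c\in L(\CABPDS_{\varphi})$ iff $c\in L(\ABPDS(\CABPDS_{\varphi}))$ for all configurations $c$. Here I would record the one bookkeeping point worth verifying: in that de-compactification the control states of $\CABPDS_{\varphi}$ — which include the product states $(p,\varphi)\in P\times\cl(\varphi)$ — are exactly the \emph{real states} of $\ABPDS(\CABPDS_{\varphi})$, whereas the auxiliary storage states only ever carry residual words still to be popped. Consequently $((p,\varphi),w)$ is itself a configuration of $\ABPDS(\CABPDS_{\varphi})$, and $((p,\varphi),w)\in L(\CABPDS_{\varphi})$ iff $((p,\varphi),w)\in L(\ABPDS(\CABPDS_{\varphi}))$.

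Finally I would invoke Theorem~\ref{theo:song1} on the ABPDS $\ABPDS(\CABPDS_{\varphi})$ to obtain an effectively computable alternating automaton $\AAut_{\CABPDS,\varphi}$ with $L(\AAut_{\CABPDS,\varphi})=L(\ABPDS(\CABPDS_{\varphi}))$. Chaining the three biconditionals then yields, for every $c=(p,w)$,
\begin{align*}
\CABPDS,(p,w),\lab\models\varphi
&\Longleftrightarrow ((p,\varphi),w)\in L(\CABPDS_{\varphi}) \\
&\Longleftrightarrow ((p,\varphi),w)\in L(\ABPDS(\CABPDS_{\varphi})) \\
&\Longleftrightarrow ((p,\varphi),w)\in L(\AAut_{\CABPDS,\varphi}),
\end{align*}
which is exactly the claimed equivalence.

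The argument carries no genuine technical weight: the substance was already discharged in Lemma~\ref{lemma:mcheck-CTL-CABPDS} and Proposition~\ref{prop:ABPDS-and-CABPDS}. The only points requiring care are that each of the three constructions — the product $\CABPDS_{\varphi}$, the de-compactification $\ABPDS(\CABPDS_{\varphi})$, and the automaton of Theorem~\ref{theo:song1} — is effective, so that the composite $\AAut_{\CABPDS,\varphi}$ is itself effectively computable, together with the identification of $(p,\varphi)$ as a real state that is preserved across the compact-to-ordinary translation. I expect this state-tracking check to be the most error-prone part of writing the proof out in full, although it is entirely routine.
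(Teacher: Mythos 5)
Your proposal is correct and follows exactly the paper's own proof: chaining Lemma~\ref{lemma:mcheck-CTL-CABPDS}, Proposition~\ref{prop:ABPDS-and-CABPDS}, and Theorem~\ref{theo:song1} to obtain the alternating automaton. The additional remark about $(p,\varphi)$ being a real state preserved under the de-compactification is a sensible bookkeeping check that the paper leaves implicit.
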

\full{
\begin{proof}
Let \CABPDS be an CABPDS, $\varphi$ a \LogCTL-formula, $(p,w)$ a configuration, and $\lab$ a regular labelling function. We construct the CABPDS  $\CABPDS_\varphi$ with $(\star)$\ $\CABPDS,(p,w)\models\varphi$ iff $((p,\varphi),w)\in L(\CABPDS_\varphi)$ according to Lemma~\ref{lemma:mcheck-CTL-CABPDS}. We apply Proposition~\ref{prop:ABPDS-and-CABPDS} to obtain: $(\star)$ iff $((p,\varphi),w)\in L(\ABPDS(\CABPDS_\varphi))$ where $\ABPDS(\CABPDS_\varphi)$ is an ABPDS. Finally, by Theorem~\ref{theo:song1} we can conclude that there is an effectively constructable alternating $\ABPDS(\CABPDS_\varphi)$-automaton $\AAut$ with $(\star)$ iff $((p,\varphi),w)\in L(\AAut)$.\qed
\end{proof}
}

\full{\section{Decidability of \ral \full{over 1-Unbounded Models}\short{with a Single Unbounded Resource}}}
\short{\section{(Un-)Decidable Model Checking Result}\label{sec:undecidable}}\label{sec:decidable}

\newcommand{\encone}[1]{\ensuremath{\textbf{[}#1\textbf{]}_\textbf{1}}}
\newcommand{\encten}[1]{\ensuremath{\textbf{[}#1\textbf{]}_\textbf{10}}}
\newcommand{\mn}{\ensuremath{\mathsf{\Delta{}max}}}
\newcommand{\con}{\ensuremath{\mathsf{\Delta{}con}}}
\newcommand{\prd}{\ensuremath{\mathsf{\Delta{}prd}}}

\short{
First, we consider the general case of $k$-unbounded \rbmi{}s\footnote{Note that undecidability proofs wrt. \rbmi{}s are stronger than those for \rbm{}s.} with $k\geq 2$. In~\cite{Bulling15ral-IJCAI,Bulling/Farwer:10a} it is shown that most variants of \ral with two resource types are undecidable. This has been proved by reductions of the halting problem of two-counter automata~\cite{HopcroftUllmann79Automata} to the different model checking problems. Two counter automata are finite automata extended with two counters. The undecidability proofs of~\cite{Bulling15ral-IJCAI,Bulling/Farwer:10a} can be adapted to our setting.
We obtain the following result:

\begin{proposition}[Corollary of~\cite{Bulling/Farwer:10a,Bulling15ral-IJCAI}]Model checking \ral (with shared resources) over $k$-unbounded ${\rbmi}s$ with $k\geq 2$ is undecidable.
\end{proposition}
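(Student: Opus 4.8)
The plan is to prove undecidability by reduction from the halting problem of two-counter (Minsky) machines, adapting the constructions already developed in~\cite{Bulling/Farwer:10a,Bulling15ral-IJCAI} to the shared-resource semantics considered here. A two-counter machine $M$ consists of a finite set of instructions operating on two counters $c_1, c_2 \in \nat_0$; each instruction either increments a counter, or decrements it / branches on whether it is zero, and $M$ halts iff it reaches a designated halt instruction. Since the halting problem for such machines is undecidable, it suffices to construct, from any $M$, a $2$-unbounded $\rbmi$ $\model_M$, a state $q_0$, an endowment $\enment_0$, and an \ral-formula $\phi_M$ such that $\model_M, q_0, \enment_0 \models \phi_M$ iff $M$ halts.

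\textbf{Key steps.} First I would encode the two counter values $c_1, c_2$ directly as the shared amounts of the two unbounded resource types $r_1, r_2 \in \Resources$; this is the natural fit for a shared, consumable, unbounded resource and is exactly why two such resources suffice. Second, I would build $\model_M$ so that its states correspond to the instructions (program counter locations) of $M$, together with auxiliary ``gadget'' states needed to simulate each instruction type. An increment of $c_i$ is realised by an action with $t(\alpha, r_i) = +1$ (producing one unit of $r_i$); a decrement by an action with $t(\alpha, r_i) = -1$ (consuming one unit). Third, and this is the delicate part, I would simulate the zero-test. Since a strategy-path can only proceed if condition (iv) of the semantics guarantees enough resources are available, a decrement action is simply disabled when the counter is $0$; I would exploit this together with the idle action (guaranteed to exist in an $\rbmi$ and which touches no resource) so that the branch ``counter is zero'' and the branch ``counter is positive'' can be distinguished by whether a consuming action is executable. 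The formula $\phi_M$ would then be a flat cooperation formula of the form $\coopdown{\Agt}\Sometm \prop{halt}$ asserting that the controlling coalition has a resource-compatible strategy reaching the halt location, with nested modalities or the opponent-priority clause (iv) used to force faithful simulation of the zero-tests.

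\textbf{Main obstacle.} The hard part will be making the simulation \emph{faithful} in both directions under the present semantics, which differs from~\cite{Bulling/Farwer:10a,Bulling15ral-IJCAI} in two respects that must be checked carefully: strategies here take the full history of states \emph{and} endowments into account (footnote in the definition), and opponents have priority in claiming resources via condition (iv). I must ensure that the endowment-history does not give the coalition extra power that would let a non-halting $M$ be spuriously accepted, and that the opponent-priority clause can be neutralised (e.g.\ by letting the simulated coalition be all of $\Agt$, or by giving opponents only the idle action in the relevant gadgets) so that the resource accounting tracks the counters exactly. A second subtlety is correctly encoding the zero-test so that a ``false'' zero-test (attempting the zero branch while the counter is positive, or the decrement branch while it is zero) leads to a dead end or a non-accepting outcome rather than to $\prop{halt}$.

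\textbf{Conclusion.} Once the reduction is shown correct, $M$ halts iff $\model_M, q_0, \enment_0 \models \phi_M$, and since $\model_M, q_0, \enment_0$, and $\phi_M$ are all effectively computable from $M$, decidability of model checking \ral over $2$-unbounded (hence $k$-unbounded for $k \geq 2$) $\rbmi$s would decide the halting problem, a contradiction. I would remark that the proof adapts the earlier private-resource reductions essentially verbatim apart from the two semantic points above, which is why the result is stated as a corollary; and that phrasing it for $\rbmi$s (models with idle actions) yields the stronger statement, since every $\rbmi$ is a special case of an $\rbm$.
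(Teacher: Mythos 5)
Your overall strategy --- reducing from the halting problem of two-counter machines, with the two counters realised as the two unbounded shared resource types and increments/decrements as producing/consuming actions --- is exactly the route the paper takes (it, too, defers the detailed gadgets to the cited works). However, there is a genuine gap at the one step the paper singles out as \emph{the} key difficulty: the zero test. Condition (iv) of the semantics only \emph{disables} an action profile whose consumption exceeds the current endowment; nothing in an \rbmi{} can \emph{force} an agent to take the decrement branch when the counter is positive, because the transition function $o$ depends only on the state and the action profile, never on the endowment. Consequently your proposed flat formula $\coopdown{\Agt}\Sometm\prop{halt}$, combined with your suggested mitigations of taking the coalition to be all of $\Agt$ or restricting the opponents to the idle action, cannot work: with no genuine opponent left, the coalition may always take the ``counter is zero'' branch even when the counter is positive, so truth of the formula only characterises halting of a machine with \emph{unreliable} zero tests, which does not yield undecidability.

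The fix, which is the content of the constructions in~\cite{Bulling/Farwer:10a,Bulling15ral-IJCAI} that the paper adapts, is the opposite of neutralising the opponents: one needs a \emph{spoiler}. Either a second agent outside the coalition is given, at each zero-branch gadget, a consuming action leading to a fail state --- so that a false zero test is punished and the formula becomes $\coopdown{1}\NDiam\prop{halt}$ with agent $2$ as spoiler --- or, for a single agent, the spoiler role is played by the agent itself via a \emph{nested} (hence non-flat) formula such as $\coopdown{1}(\neg\coopdown{1}\Next\prop{err})\NUntil\prop{halt}$. You gesture both at the problem (``a false zero-test leads to a dead end'') and at nesting as an option, but you neither commit to a mechanism nor notice that your flat-formula, idle-opponents variant forecloses the only available one. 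The remaining points you raise (history-of-endowment strategies, opponent priority in condition (iv), and the fact that undecidability over \rbmi{}s is the stronger statement) are correctly identified and are indeed the only other places where the cited proofs need re-checking; the paper even notes that state-history strategies suffice and that shared resources make the adaptation simpler.
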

}

\short{Second, we show decidability over $1$-unbounded \rbm{}s. }\full{Throughout this section }\short{In the following }we assume that $\model =
(\Agt,Q,\Pi,\pi,\Act,d,o,\short{\Resources},\full{\mathfrak{R},}t)$ \short{ consists of }\full{is an $1$-unbounded \rbm where $\mathfrak{R}$ is a shared resource structure consisting} of a single unbounded shared resource. Moreover, let  $A$ be a set of agents and $\bar{A}=\Agt\backslash A$. As there is only one resource, we can simplify the notation. We write $\eta$ for $\eta(r)$, $\consumption(\alpha)$ instead of $\consumption(\alpha,r)$ and so on. Also, for an action profile $\vec{\alpha}_A$ we use $\consumption(\vec{\alpha}_A)$ (resp. $\production(\vec{\alpha}_A)$) to refer to $\sum_{a\in A}\consumption(\alpha_a)$ (resp. $\sum_{a\in A}\production(\alpha_a)$). Furthermore, for a natural number $x$, $\encone{x}$ is used to refer to a sequence $||\ldots|$ of $x$ lines each representing one element on the stack, i.e. $\encone{x}$ corresponds to the \emph{unary encoding} of $x$. We write  $\encone{0}=\epsilon$. Similarly, we use $\encten{y}$ to refer to the \emph{ternary encoding} of $y=\encone{x}$ for  a natural number $x$.
\full{\subsection{Encoding of an \rbmi}}
We define the following auxiliary functions where  $q$ is a state in $\model$, $\vec{\alpha}_A$ a joint action of $A$ and $\vec{\alpha}_{\bar{A}}$ a joint action of $\bar{A}$:
\short{
$\mn_{\bar A}(q) = \max\{\consumption(\vec{\alpha}_{\bar{A}}) \mid  \vec{\alpha}_{\bar{A}} \in d_{\bar A}(q)\}$;
 $\con_A(q,\vec{\alpha}_A) = \consumption(\vec{\alpha}_A)+\mn_{\bar A}(q)$; and  
$\prd_A(q,\vec{\alpha}_A,\vec{\alpha}_{\bar{A}}) = \mn_{\bar A}(q)-\consumption(\vec{\alpha}_{\bar{A}})+\production((\vec{\alpha}_A,\vec{\alpha}_{\bar{A}}))$.
}
\full{\begin{eqnarray*}
\mn_{\bar A}(q) &=& \max\{\consumption(\vec{\alpha}_{\bar{A}}) \mid  \vec{\alpha}_{\bar{A}} \in d_{\bar A}(q)\}\\
\con_A(q,\vec{\alpha}_A) &=& \consumption(\vec{\alpha}_A)+\mn_{\bar A}(q)\\
\prd_A(q,\vec{\alpha}_A,\vec{\alpha}_{\bar{A}}) &=& \mn_{\bar A}(q)-\consumption(\vec{\alpha}_{\bar{A}})+\production((\vec{\alpha}_A,\vec{\alpha}_{\bar{A}}))
\end{eqnarray*}}
The number $\mn_{\bar A}(q)$ denotes the worst case consumption of resources of the opponents at $q$, that is the maximal amount of resources they could claim.  The number $\con_A(q,\vec{\alpha}_A)$ is the consumption of resources if $A$ executes $\vec{\alpha}_A$ and the opponents choose their actions with the worst case consumption; this models a pessimistic view. This is valid as the proponents can never be sure to have more resources available. Finally,  $\prd_A(q,\vec{\alpha}_A,\vec{\alpha}_{\bar{A}})$ denotes the number of resources that need to be produced after $(\vec{\alpha}_A,\vec{\alpha}_{\bar{A}})$ was executed at $q$. It is the sum of the number of resources produced by $(\vec{\alpha}_A,\vec{\alpha}_{\bar{A}})$,  and the difference between the consumption of the estimated worst case behavior of the opponents and the consumption of the actions which were actually executed by $\bar{A}$. \full{We state the following  lemma which is fundamental for the correctness of the encoding defined below. It justifies that we can first assume the worst-case behavior of the opponents before correcting this choice.

\begin{lemma}\label{lemma:prod}
Let $\vec{\alpha}=(\vec{\alpha}_A,\vec{\alpha}_{\bar{A}})$ be a tuple consisting of an action profile $\vec{\alpha}_A$ of $A$,  $\vec{\alpha}_{\bar{A}}$ be one of $\bar{A}$ and  $q$ be a state in $\model$. We have that 
\begin{itemize}
\item[(a)] $\production(\vec{\alpha})-\consumption(\vec{\alpha})=\prd_A(q,\vec{\alpha}_A,\vec{\alpha}_{\bar{A}})-\con_A(q,\vec{\alpha}_A)$; and
\item[(b)]the following statements are equivalent for any natural number $x$: 
\begin{itemize}
\item[(i)] for all $\vec{\alpha}'\in \Act_{\bar{A}}$: $x \geq \sum_{a \in \bar{A}}\consumption(\alpha'_a)+\sum_{a \in A}\consumption(\alpha_a)$.
\item[(ii)] $x\geq  \con_A(q,\vec{\alpha}_A)$.
\end{itemize}
\end{itemize}
\end{lemma}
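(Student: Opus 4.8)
The plan is to prove both parts by direct computation, exploiting two structural facts: that consumption is additive over the agents (so that $\consumption(\vec{\alpha}_A)+\consumption(\vec{\alpha}_{\bar{A}})=\consumption(\vec{\alpha})$, which holds by the notational convention that $\consumption(\vec{\alpha}_A)$ abbreviates $\sum_{a\in A}\consumption(\alpha_a)$ and because $A$ and $\bar{A}$ partition $\Agt$), and that $\mn_{\bar A}(q)$ is, by definition, the maximum of $\consumption(\vec{\alpha}'_{\bar{A}})$ over the opponents' available moves at $q$. Neither part requires induction or any property of the model beyond the definitions of the auxiliary functions; the argument is purely arithmetic.

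For part (a), I would simply expand the right-hand side. Substituting the definitions of $\prd_A$ and $\con_A$ gives
\[
\prd_A(q,\vec{\alpha}_A,\vec{\alpha}_{\bar{A}})-\con_A(q,\vec{\alpha}_A)=\big(\mn_{\bar A}(q)-\consumption(\vec{\alpha}_{\bar{A}})+\production(\vec{\alpha})\big)-\big(\consumption(\vec{\alpha}_A)+\mn_{\bar A}(q)\big).
\]
The two occurrences of $\mn_{\bar A}(q)$ cancel, leaving $\production(\vec{\alpha})-\consumption(\vec{\alpha}_A)-\consumption(\vec{\alpha}_{\bar{A}})$. By additivity of consumption this equals $\production(\vec{\alpha})-\consumption(\vec{\alpha})$, which is the desired identity.

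For part (b), the key observation is that in statement (i) the only quantity depending on the universally quantified profile $\vec{\alpha}'$ is $\sum_{a\in\bar{A}}\consumption(\alpha'_a)=\consumption(\vec{\alpha}'_{\bar{A}})$, whereas $\sum_{a\in A}\consumption(\alpha_a)=\consumption(\vec{\alpha}_A)$ is fixed. Hence (i) holds if and only if $x$ dominates the worst (largest) such value, i.e. $x\geq\consumption(\vec{\alpha}_A)+\max_{\vec{\alpha}'\in d_{\bar A}(q)}\consumption(\vec{\alpha}'_{\bar{A}})$. By definition this maximum is exactly $\mn_{\bar A}(q)$, so the condition reads $x\geq\consumption(\vec{\alpha}_A)+\mn_{\bar A}(q)=\con_A(q,\vec{\alpha}_A)$, which is precisely (ii). I would spell out both directions: for $(\text{i})\Rightarrow(\text{ii})$, instantiate (i) at a profile attaining the maximum, which exists since $d_{\bar A}(q)$ is finite and non-empty; for $(\text{ii})\Rightarrow(\text{i})$, use $\consumption(\vec{\alpha}'_{\bar{A}})\leq\mn_{\bar A}(q)$ for every available $\vec{\alpha}'$ to recover the inequality of (i) for all of them.

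The proof is essentially mechanical, so there is no real obstacle; the one point requiring care is aligning the set over which $\vec{\alpha}'$ is quantified in (i) with the set $d_{\bar A}(q)$ over which $\mn_{\bar A}(q)$ is defined. I would take the quantification in (i) to range over the opponents' moves that are actually available at $q$, consistent with condition (iv) of the path semantics (whose purpose is to guarantee executability of the opponents' actions), so that the maximum appearing in the reduction above is genuinely attained by some profile in $d_{\bar A}(q)$ and coincides with $\mn_{\bar A}(q)$.
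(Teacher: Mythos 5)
Your proof is correct and takes essentially the same route as the paper's own: part (a) by expanding the definitions of $\prd_A$ and $\con_A$, cancelling $\mn_{\bar A}(q)$, and using additivity of consumption over the partition $A,\bar{A}$; part (b) by reducing the universal quantification over opponents' profiles to the single inequality against the maximum $\mn_{\bar A}(q)$. You additionally flag the mismatch between the quantification over $\Act_{\bar{A}}$ in (i) and the set $d_{\bar A}(q)$ over which $\mn_{\bar A}(q)$ is defined, a point the paper's one-line proof of (b) silently glosses over.
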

\begin{proof}
\begin{itemize}
\item[(a)] $\prd_A(q,\vec{\alpha}_A,\vec{\alpha}_{\bar{A}})-\con_A(q,\vec{\alpha}_A) = 
     \mn_{\bar A}(q)-\consumption(\vec{\alpha}_{\bar{A}})+\production((\vec{\alpha}_A,\linebreak\vec{\alpha}_{\bar{A}})) - 
     (\consumption(\vec{\alpha}_A)+\mn_{\bar A}(q)) = 
      \production((\vec{\alpha}_A,\vec{\alpha}_{\bar{A}})) - (\consumption(\vec{\alpha}_A)+\consumption(\vec{\alpha}_{\bar{A}}))=
      \production(\vec{\alpha})-\consumption(\vec{\alpha})$.
\item[(b)] $\sum_{a \in \bar{A}}\consumption(\alpha'_a)+\sum_{a \in A}\consumption(\alpha_a) \leq x$ for all $\vec{\alpha}'\in \Act_{\bar{A}}$
 iff 
 $\consumption(\vec{\alpha}_A)+\mn_{\bar A}(q)\linebreak \leq x$  iff  $\con_A(q,\vec{\alpha}_A) \leq x$.
\end{itemize}\qed
\end{proof}
}


From $\model$ and $A$, we define an $r$-compact ABPDS where  $r=\encone{\max_{q,\vec{\alpha}_A,\vec{\alpha}_{\bar{A}}}\{\linebreak\con_A(q,\vec{\alpha}_A), \prd_A(q,\vec{\alpha}_A,\vec{\alpha}_{\bar{A}})\}}$ is the maximal number which is ever \emph{consumed} or \emph{produced}.
\begin{definition}[$\CABPDS_{\model,A}$]
The $r$-compact ABPDS $\CABPDS_{\model,A}$ is the CABPDS  $(S,\Gamma,\Delta,F,r)$ where $S=F=Q$, $\Gamma=\{|\}$, and for all $q \in Q$, $\vec{\alpha}_A \in d_A(q)$ we have that 
\full{\[}\short{\\$}(q,\encone{\con_A(q,\vec{\alpha}_A)})\Delta\{(o(q,(\vec{\alpha}_A,\vec{\alpha}_{\bar{A}})),\encone{\prd_A(q,\vec{\alpha}_A,\vec{\alpha}_{\bar{A}})}) \mid \vec{\alpha}_{\bar{A}} \in d_{\bar A}(q)\}.\full{\]}\short{$}
\end{definition}
\full{It is easily seen that $\CABPDS_{\model,A}$ is indeed an $r$-compact ABPDS.  The purpose of}\short {The CABPDS } $\CABPDS_{\model,A}$ \full{is to encode}\short{ encodes } the outcome sets $out(q,s_A,\eta)$ for any state $q$ and strategy $s_A$.  \full{Let $w\in\{|\}^*$ and $\rho\in\Runs_{\CABPDS_{\model,A}}$. We define $h(w)=\encten{w}$ and lift $h$ to configurations $h((p,w))=(p,h(w))$, to finite or infinite sequences $c_0c_1\ldots$ of configurations via $h(c_0c_1\ldots)=h(c_0)h(c_1)\ldots$, and to runs $h(\rho)=\{h(\kappa)\mid\kappa\in\rho\}$. Then, the next result states that runs of $\CABPDS_{\model,A}$ are the outcome sets of $A$. First, observe that  for every strategy $s_A$ we have that there is a run $\rho\in\Runs_{\CABPDS_{\model,A}}$ with $h(\rho)=out(q,s_A,\eta)$. The automaton simply chooses the same actions as specified by the strategy. Similarly, in the reverse case, if the automaton takes a transition  corersponding to an action tuple $\vec{\alpha}_A$ after the finite run $b$, then we define the strategy $s_A$ such that $s_A(h(b))=\vec{\alpha}_A$. We note that here it is important that the strategy is perfect recall and takes the hisotry of  states as well as of shared endowments into account.

\begin{lemma}[Encoding Lemma]\label{lemma:encoding}
$h:\Runs_{\CABPDS_{\model,A}}\rightarrow \{out(q,s_A,\eta)\mid (q,\eta)\in Q\times\Enments \linebreak\text{ and $s_A$ is a strategy of $A$}\}$ is an isomorphism.
\end{lemma}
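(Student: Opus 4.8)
The plan is to verify that $h$ is a structure-preserving bijection by treating injectivity, the root/branching correspondence, and the two inclusions between the image of $h$ and the target set separately. Injectivity comes essentially for free: because $\Gamma=\{|\}$, a configuration $(p,w)$ is completely determined by $p$ together with the length $|w|$, so $h$ is injective on configurations, hence on paths, and --- since a run is identified with its set of paths --- injective on runs. The same observation shows that $h$ preserves the root, since a run rooted at $(q,\encone{\eta})$ is sent to a set of paths whose common first element is $(q,\eta)$, thereby fixing the parameters of the outcome set. By the shape of the single transition schema of $\CABPDS_{\model,A}$, the children of a node labelled $(q,w)$ correspond to the resource-extended successors obtained by fixing the proponents' move $\vec{\alpha}_A$ and letting $\vec{\alpha}_{\bar A}$ range over $d_{\bar A}(q)$. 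It therefore suffices to show that the image of $h$ is precisely the set of outcome sets, which I do by two inclusions, both resting on Lemma~\ref{lemma:prod}.

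For the inclusion $\supseteq$ I would fix $(q,\eta)$ and a strategy $s_A$ and build a run $\rho$ rooted at $(q,\encone{\eta})$ that follows $s_A$: at the node reached along a finite branch $b$, whose $h$-image is a history in $(Q\times\Enments)^+$, the automaton applies the transition labelled by $\vec{\alpha}_A=s_A(h(b))$. Two facts make the construction coherent. Lemma~\ref{lemma:prod}(b) identifies the resource-availability clause in the definition of a $(q,\eta,s_A)$-path with the condition $\eta_i\ge\con_A(q_i,\vec{\alpha}_A)$ that enables popping $\encone{\con_A(q_i,\vec{\alpha}_A)}$ from a stack of height $\eta_i$; thus the transition is enabled exactly when $s_A$'s move is resource-compatible. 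Lemma~\ref{lemma:prod}(a) shows that popping $\con_A$ and pushing $\prd_A$ effects precisely the net change $\eta_{i+1}=\eta_i+\production(\vec{\alpha})-\consumption(\vec{\alpha})$ demanded by the resource-update clause. Since the schema spawns one child per $\vec{\alpha}_{\bar A}\in d_{\bar A}(q)$, the branches of $\rho$ correspond bijectively to the opponents' responses, which is exactly how $out(q,s_A,\eta)$ gathers all $(q,\eta,s_A)$-paths; hence $h(\rho)=out(q,s_A,\eta)$.

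For $\subseteq$ I would take an arbitrary run $\rho$ rooted at $(q,\encone{\eta})$ and read back a strategy: at every node the proponents commit to the single tuple $\vec{\alpha}_A$ labelling the applied transition, while all opponent moves occur as children, so setting $s_A(h(b))=\vec{\alpha}_A$ for each finite branch $b$ of $\rho$, and choosing $s_A$ arbitrarily on histories absent from $\rho$, yields a strategy with $h(\rho)=out(q,s_A,\eta)$ by the same two facts. The main obstacle is the well-definedness of this $s_A$: it must be a genuine function on $(Q\times\Enments)^+$, so no history may be assigned two distinct actions. This is exactly where perfect recall is indispensable --- since $h$ is injective on the branches of $\rho$, distinct nodes carry distinct histories and the assignment never conflicts; a state-based strategy would fail here, because a state recurring along a branch with different proponent moves would be assigned inconsistent actions. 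Combining the two inclusions with injectivity and the root/branching correspondence establishes that $h$ is the desired isomorphism.
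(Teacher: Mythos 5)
Your proposal is correct and follows essentially the same route as the paper's proof: both directions are handled by constructing a run step-by-step from a strategy and reading a strategy back from a run, with Lemma~\ref{lemma:prod}(a) and (b) supplying the resource-update and resource-availability correspondences, and well-definedness of the read-back strategy secured by the distinctness of finite branches. Your explicit treatment of injectivity of $h$ (via $\Gamma=\{|\}$) is a small addition the paper leaves implicit, but it does not change the structure of the argument.
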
}
\full{\begin{proof}[Sketch]
The proof is done by induction on the number of simulation steps. 
Let $s_A$ be a strategy and $t=out(q,s_A,\eta)$ the $(q,s_A,\eta)$-outcome. First, we argue that there is a run $\rho\in\Runs_{\CABPDS_{\model,A}}((q,\encone{\eta}))$ with $h(\rho)=t$. Let $t^i$ and $\rho^i$ be the finite version of $t$ and $\rho$ up to depth $i\geq 0$, respectively. We construct $\rho$ step-by-step. Clearly, $h(\rho^0)=h(t^0)$. Let $b^i$ be any finite branch in $t^i$ with final configuration $(q',\eta')$ and with successor states $\{(q_1,\eta_1),\ldots,(q_n,\eta_n)\}$ and let the action $\vec{\alpha}_{\bar{A}}^j$ of the opponents be the action which led to $(q_j,\eta_j)$ given that $s_A(t^i)=\vec{\alpha}_A$, i.e. $o(q',(\vec{\alpha}_A,\vec{\alpha}^j_{\bar{A}}))=q_j$, for $1\leq j\leq n$. By definition there is a transition $(q',\encone{\con_A(q',\vec{\alpha}_A)})\Delta\{(q_j,\encone{\prd_A(q',\vec{\alpha}_A,\vec{\alpha}_{\bar{A}})}) \mid \vec{\alpha}_{\bar{A}} \in 1\leq  j\leq n\}$. Let $\kappa^i$ be the finite branch on $\rho^i$ corresponding to $b^i$, by induction we have $h(\kappa^i)=b^i$. The last state on $\kappa^i$ is $c'=(q',\encone{\eta'})$. By Lemma~\ref{lemma:prod}(b) and the fact  that  there is a transition after $b^i$, $\eta'\geq \con_A(q',\vec{\alpha}_A)$. Thus, the transition of the automaton can be taken and by  Lemma~\ref{lemma:prod}(a),  $\{(q_1,\encone{\eta_1}),\ldots,(q_n,\encone{\eta_n})\}$ is a direct successor of $c'$.

For the other direction, let $\rho\in\Runs_{\CABPDS_{\model,A}}(c)$. The proof is  done in a similar way. Let $\kappa^i=(q_0,w_0)\ldots(q_n,w_n)$ be a finite branch in $\rho$ and assume that the automaton  takes as next transition $(q_n,\encone{\con_A(q_n,\vec{\alpha}_A)})\Delta\{(o(q_n,(\vec{\alpha}_A,\vec{\alpha}_{\bar{A}})),\encone{\prd_A(q_n,$ $\vec{\alpha}_A,\vec{\alpha}_{\bar{A}})}) \mid \vec{\alpha}_{\bar{A}} \in d_{\bar A}(q)\}$. Then, we define $s_A(h(\kappa^i))=\vec{\alpha}_A$. To see that $s_A$ is well-defined we observe that  $\rho$ cannot contain two finite banches $b$ and $b'$ which are identical.\qed
%
%
\end{proof}}
\full{\subsection{Model Checking \ral over $1$-Unbounded \rbm{}s is Decidable }}
\short{
To show our main decidable result, we
need to 
extend \rbm{}s with regular labelling functions $\pi : \Pi \to 2^{Q\times\Enments}$ as done in Section~\full{\ref{sec:CTLABPDS}}\short{\ref{sec:CTL}} for PDSs.}\full{
In this section we put the pieces together and show that model checking $\ral$ over $1$-unbounded \rbm{}s is decidable.  Before we do so, we need to extend \rbm{}s with regular labelling functions $\pi : \Pi \to 2^{Q\times\Enments}$ as done in Section~\ref{sec:CTLABPDS} for PDSs. Clearly, the ``sate-based''  labelling function $\pi':\Props\rightarrow \powerset{Q}$ in $\model$ is a special regular labelling function with $\pi(\prop{p})=\{(q,\enment)\mid \enment\in\Enments, q\in\pi'(\prop{p})\}$. From now on, we assume that $\pi$ is regular. Our model checking algorithm builds upon model checking \CTL formulae over CABPDSs as outlined in Theorem~\ref{theorem:CTL-CABPDS}.  The main idea is the following. Suppose} \short{Now, suppose that }we want to model check $\model,q_0,\eta\models\coopdown{A}\phi$  where $\coopdown{A}\phi$ is a flat  formula\full{ and $\Epath\varphi$ is in negation normal form}\full{~\footnote{Note that the release operator cannot occur here.}}. Firstly, we  construct the CABPDS $\CABPDS_{\model,A}$ which accepts the outcome sets of $A$\short{. }\full{ by the Encoding Lemma~\ref{lemma:encoding}.} Let $\lab$ be the labelling function defined as: $(q,\encone{\eta})\in\lab(\prop{p})$ iff $(q,\enment)\in\pi(\prop{p})$. Then, we have that:
$\model,q_0,\eta\models\coopdown{A}\phi$ if, and only if,  $\CABPDS_{\model,A},(q,\encone{\enment}),\lab\models \Epath\varphi$. By Theorem~\ref{theorem:CTL-CABPDS} this can be efficiently solved by constructing an alternating automaton $\AAut_{\CABPDS_{\model,A},\Epath\varphi}$ that accepts $((q,\Epath\varphi),\encone{\eta})$ iff the above equivalence is true. \full{This shows the following result:

\begin{proposition}\label{lemma:decidability}
Let the labelling function in $\model$ be regular and $\coop{A}\varphi$ be a flat $\ral$-formula in negation normal form. Then, we can construct an alternating automaton $\AAut_{\CABPDS_{\model,A},\Epath\varphi}$ such that $((q,\Epath\varphi),\encone{\eta})\in L(\AAut_{\CABPDS_{\model,A},\Epath\varphi})$ if, and only if, $\model,q,\eta\models\coopdown{A}\phi$.
\end{proposition}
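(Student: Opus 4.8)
The plan is to reduce the strategic satisfaction of the flat formula $\coopdown{A}\phi$ in $\model$ to ordinary \CTL satisfaction of $\Epath\varphi$ over the compact ABPDS $\CABPDS_{\model,A}$, and then to invoke Theorem~\ref{theorem:CTL-CABPDS}. The reduction is carried out in four steps. First I reuse the $r$-compact ABPDS $\CABPDS_{\model,A}$ constructed above and recall from the Encoding Lemma (Lemma~\ref{lemma:encoding}) that the map $h$ is an isomorphism between $\Runs_{\CABPDS_{\model,A}}$ and the set of all outcome sets $out(q,s_A,\eta)$ (over states $q$, endowments $\eta$ and strategies $s_A$ of $A$). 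Under $h$, a run $\rho$ rooted at $(q,\encone{\eta})$ corresponds to the outcome $out(q,s_A,\eta)$ of the strategy $s_A$ read off from the action tuples $\vec{\alpha}_A$ that label the transitions chosen in $\rho$, while the alternating branching over all opponent moves $\vec{\alpha}_{\bar A}\in d_{\bar A}$ in each transition reproduces the family of opponent responses generating the outcome paths. Moreover, each configuration $(q',\encone{\eta'})$ occurring on a branch of $\rho$ corresponds to the state-endowment pair $(q',\eta')$ at the matching position of the associated outcome path. This is the structural backbone of the argument: it converts the ``$\exists$ strategy, $\forall$ outcome paths'' pattern of $\coopdown{A}$ into the ``$\exists$ run, $\forall$ run-paths'' reading that the \CTL-over-CABPDS semantics assigns to $\Epath$.

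Second, I fix the labelling. Define $\lab$ by $(q',\encone{\eta'})\in\lab(\prop{p})$ iff $(q',\eta')\in\pi(\prop{p})$ for every $\prop{p}\in\Props$. Because there is a single resource, endowments are natural numbers and $\eta'\mapsto\encone{\eta'}$ is a bijection between $\Enments$ and $\{|\}^*$ that merely re-presents a number in unary; hence $\lab$ inherits regularity from the assumed-regular $\pi$ (the state-based case $\pi(\prop{p})=\{(q',\eta')\mid q'\in\pi'(\prop{p})\}$ being trivially regular). With this $\lab$, for any configuration $c=(q',\encone{\eta'})$ and proposition $\prop{p}$ we have $\CABPDS_{\model,A},c,\lab\models\prop{p}$ iff $\model,q',\eta'\models\prop{p}$, and this equivalence propagates through the Boolean connectives. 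This is exactly where flatness is used: since $\phi$ has no nested cooperation modalities, every state subformula occurring in $\phi$ is a Boolean combination of propositions whose truth at a position depends only on $(q',\eta')$, that is, only on the corresponding configuration $c$, so no further strategic reasoning is needed and the reduction to plain \CTL is sound.

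Third --- the heart of the proof --- I establish the equivalence
\[
\model,q,\eta\models\coopdown{A}\phi \iff \CABPDS_{\model,A},(q,\encone{\eta}),\lab\models\Epath\varphi
\]
by a case analysis on the single top-level temporal operator of the flat formula, matching the \ral clause against the corresponding \CTL-over-CABPDS clause. For $\phi=\psi_1\NUntil\psi_2$, a witnessing strategy $s_A$ yields via $h^{-1}$ a run $\rho$ whose branches are precisely the $h$-images of the outcome paths; the position $i$ at which $\psi_2$ becomes true (with $\psi_1$ true before) on an outcome path is preserved by $h$, because the configurations along the branch carry exactly the state-endowment pairs of the path and Step~2 guarantees that $\psi_1,\psi_2$ are evaluated identically on both sides. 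Conversely, a witnessing run maps under $h$ to the outcome set of the strategy it encodes, and the same index matching gives the \ral witness. The cases $\phi=\Next\psi$ and $\phi=\NBox\psi$ are analogous, the latter being handled by the construction behind Theorem~\ref{theorem:CTL-CABPDS} through the release operator once $\Epath\varphi$ is taken in negation normal form --- which is why the proposition assumes $\coopdown{A}\phi$ to be in negation normal form. Finally, applying Theorem~\ref{theorem:CTL-CABPDS} to $\CABPDS_{\model,A}$, $\lab$ and $\Epath\varphi$ yields the effectively computable alternating automaton $\AAut_{\CABPDS_{\model,A},\Epath\varphi}$ with $((q,\Epath\varphi),\encone{\eta})\in L(\AAut_{\CABPDS_{\model,A},\Epath\varphi})$ iff $\CABPDS_{\model,A},(q,\encone{\eta}),\lab\models\Epath\varphi$; combined with the displayed equivalence this gives the claim.

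The main obstacle I anticipate lies in the third step, namely the faithful transfer of the semantics along $h$. The Encoding Lemma already aligns the tree structure of runs with outcome sets, but I must verify that it simultaneously aligns the endowment bookkeeping, so that the position indices and the truth values of the inner subformulas agree on matching paths. This rests on the correctness of the consumption/production encoding --- that the worst-case opponent consumption $\mn_{\bar A}$ together with Lemma~\ref{lemma:prod} faithfully realises the opponent-priority condition~(iv) of the outcome definition --- and on confirming that the unary-encoded endowments carried on the stack are exactly those appearing on the corresponding outcome paths. Getting these endowment components (and not merely the control states) to line up is the delicate point; once it is in place, the clause-by-clause matching of the temporal operators is routine.
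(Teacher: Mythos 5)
Your proposal is correct and follows essentially the same route as the paper: the paper establishes this proposition exactly by combining the Encoding Lemma~\ref{lemma:encoding} (the isomorphism $h$ between runs of $\CABPDS_{\model,A}$ and outcome sets), the labelling $\lab$ defined by $(q,\encone{\eta})\in\lab(\prop{p})$ iff $(q,\enment)\in\pi(\prop{p})$, the resulting equivalence $\model,q_0,\eta\models\coopdown{A}\phi$ iff $\CABPDS_{\model,A},(q,\encone{\enment}),\lab\models \Epath\varphi$, and an application of Theorem~\ref{theorem:CTL-CABPDS}. Your clause-by-clause verification of that equivalence and your remarks on where flatness and negation normal form are used only fill in details the paper asserts without elaboration.
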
}

\short{Finally, this procedure can be combined with the standard bottom-up model checking approach used for \CTLs~\cite{Clarke99modelchecking}. }
\full{This proposition can be applied recursively to model check an arbitrary $\ral$-formula $\varphi$, following the standard bottom-up model checking approach used for \CTLs~\cite{Clarke99modelchecking}.} Firstly, the innermost (flat) formulae $\psi$ of $\varphi$ are  considered. \full{By Proposition~\ref{lemma:decidability} we}\short{We} can compute the regular set of configurations at which each of these subformulae $\psi$  hold\short{ and}\full{. Then, we} replace the  subformula \full{$\psi$} by a fresh propositions $\prop{p}_{\psi}$\full{ and }\short{. Then, we }extend the regular labelling of $\model$ such that $\prop{p}_{\psi}$ is assigned the  configurations at which $\psi$ is true (Theorem~\ref{theorem:CTL-CABPDS}). Applied recursively, we obtain:

\begin{theorem}\label{theorem:decidability}
The model-checking problem for \ral (with shared resources) over $1$-unbounded {\rbm}s is decidable.
\end{theorem}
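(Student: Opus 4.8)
The plan is to prove decidability by structural induction on the \ral-formula $\varphi$, reducing every cooperation modality to a \CTL\ model-checking query over the encoding CABPDS and propagating the computed truth sets upward as regular labellings. The base of the induction is a flat subformula $\coopdown{A}\phi$, where $\phi$ is a single temporal formula over propositions. Here I would invoke the reduction already prepared above: build $\CABPDS_{\model,A}$, define $\lab$ from the model's labelling via $(q,\encone{\eta})\in\lab(\prop{p})$ iff $(q,\eta)\in\pi(\prop{p})$, and use the Encoding Lemma~\ref{lemma:encoding} together with the semantics of $\Epath$ over ABPDSs to establish $\model,q,\eta\models\coopdown{A}\phi$ iff $\CABPDS_{\model,A},(q,\encone{\eta}),\lab\models\Epath\phi'$, where $\phi'$ is obtained from $\phi$ by reading its temporal operators ($\Next,\NUntil,\NBox$) as the corresponding \CTL\ operators. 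The point is that under the isomorphism $h$ a strategy of $A$ corresponds to a run of $\CABPDS_{\model,A}$ and its outcome set to the set of paths of that run, so ``there is a strategy all of whose outcomes satisfy $\phi$'' matches exactly ``there is a run all of whose paths satisfy $\phi'$,'' which is the meaning of $\Epath\phi'$.

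Next I would apply Theorem~\ref{theorem:CTL-CABPDS} to this query to obtain an effectively computable alternating automaton $\AAut_{\CABPDS_{\model,A},\Epath\phi'}$ accepting exactly the configurations $((q,\Epath\phi'),\encone{\eta})$ for which the formula holds (this is Proposition~\ref{lemma:decidability}). Restricting to the initial control state $(q,\Epath\phi')$ and to unary stacks $\encone{\eta}$, the automaton accepts a regular set, which I would translate back to the model level as the truth set $T=\{(q,\eta)\mid ((q,\Epath\phi'),\encone{\eta})\in L(\AAut_{\CABPDS_{\model,A},\Epath\phi'})\}$. Because $T$ is regular, it can serve as the value of a fresh proposition $\prop{p}_\psi$ (with $\psi:=\coopdown{A}\phi$) in an extended regular labelling.

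For the inductive step I would pick an innermost cooperation subformula $\psi=\coopdown{A}\phi$ of $\varphi$, replace it by $\prop{p}_\psi$, extend $\pi$ by assigning $T$ to $\prop{p}_\psi$, and recurse on the strictly smaller formula; Boolean combinations (including negations) of already-solved subformulae are handled directly by closure of regular sets under union, intersection, and complement. After all cooperation subformulae have been eliminated bottom-up, $\varphi$ has been turned into a single proposition whose regular truth set is known, and the original query $\model,q_0,\eta\models\varphi$ reduces to the membership test $(q_0,\eta)\in\pi(\prop{p}_\varphi)$, i.e.\ acceptance of $(q_0,\encone{\eta})$ by an alternating automaton, which is decidable.

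The main obstacle is to keep everything regular across the recursion and to reconcile the two representations of endowments. Each iteration uses a different coalition $A$, hence a different $\CABPDS_{\model,A}$, so the labelling cannot be carried at the automaton level; it must be maintained as a model-level regular labelling over $(q,\eta)$ pairs and re-instantiated as $\lab$ whenever a new CABPDS is built. The subtle points I would verify carefully are (i) that the translation between a model endowment $\eta$ and its unary stack $\encone{\eta}$ sends regular sets of $(q,\eta)$ to regular sets of configurations and back, so that the extracted $T$ is again a legitimate regular labelling; and (ii) that the base-case equivalence still holds after propositions have been replaced by their regular truth sets, which follows from the induction hypothesis since the replacement preserves the satisfaction relation. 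Given the Encoding Lemma and Theorem~\ref{theorem:CTL-CABPDS}, these closure and correspondence checks are the only genuinely new things to establish.
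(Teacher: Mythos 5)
Your proposal matches the paper's own argument: a bottom-up induction that handles each innermost flat cooperation subformula via the Encoding Lemma, the reduction to $\Epath$ over $\CABPDS_{\model,A}$, and Theorem~\ref{theorem:CTL-CABPDS}, then substitutes a fresh proposition with the extracted regular truth set and recurses. The only additions are your explicit closure/correspondence checks (regularity under the $\eta\mapsto\encone{\eta}$ translation, Boolean combinations via regular-language closure), which the paper leaves implicit but which are correct and harmless.
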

\full{\begin{proof}[Sketch] The proof proceeds by induction on the formula structure.
Suppose we want to model check $\model,q,\enment\models \coop{A}\Sometm\phi$. The other cases are handled analogously. Let  $\xi=\coop{B}\chi$ be any strict subformula of $\phi$. By induction hypothesis and Lemma ~\ref{lemma:decidability} we can construct an alternating automaton $\AAut_{\CABPDS_{\model,B},\Epath\chi}$ that  accepts  exactly those $((q',\Epath\chi),h(\enment'))$ with $\model,q',\eta'\models\xi$. Then, we replace $\xi$ in $\phi$ with a fresh proposition $\prop{p}_\xi$ and extend $\pi$ by defining $\pi(p_\xi)=L(\AAut'_{\CABPDS_{\model,B},\Epath\chi})$ where $\AAut'_{\CABPDS_{\model,B},\Epath\chi}$ is the automaton with $L(\AAut'_{\CABPDS_{\model,B},\Epath\chi})=\{(q,\eta)\mid ((q,\Epath\chi),\encone{\eta})\in L(\AAut_{\CABPDS_{\model,B},\Epath\chi})\}$. We proceed with this procedure until the ``updated'' $\phi$ is completely propositional. Then, we can apply Proposition~\ref{lemma:decidability} to check whether  $\model,q,\enment\models \coop{A}\Sometm\phi$.\qed 
\end{proof}}

\full{\section{General Undecidability Result}\label{sec:undecidable}
In~\cite{Bulling15ral-IJCAI,Bulling/Farwer:10a} it has been shown that most variants of \ral are undecidable. This has been proved by reductions of the halting problem of two-counter automata~\cite{HopcroftUllmann79Automata} to the different model checking problems. Two counter automata are finite automata extended with two counters. Transitions depend on the current state of the automaton and on whether the counters are zero or non-zero. If a transition is taken, the automaton may change its control state and may increment or decrement the counters. The basic idea of the reduction is to encode a two counter automaton as an \rbmi\footnote{We note that we show undecidability over \rbmi{}s. Such undecidability result are stronger than for \rbm{}s as the former  is a special case of the latter.}.  Each of the two counters corresponds to a resource type. Agents' actions are used to simulate the selection of a transition and the incrementation and decrementation of counters. The key difficulty is to encode the \emph{zero test}, i.e. to check whether resources are available. The two counter automaton can check if a counter is zero or not in the transition relation by definition. But, if in the resource bounded model a transition should only be taken if no resources are available, there is nothing which can prevent the agent to take the transition even if it has resources available. Clearly, such an inconsistent behavior would break the simulation. Therefore, a second agent, playing the role of a spoiler, is used to check that such inconsistent transitions result in a ``fail states'' which cannot be used to witness an accepting run of the automaton. Then, it is  shown that the two-counter automaton halts on the empty input iff  $\coopdown{1}\mathbf{F}\prop{halt}$ is true in a model which encodes the transition table of the automaton~\cite{Bulling15ral-IJCAI}. In another result the authors of~\cite{Bulling15ral-IJCAI} also show that undecidability is the case for a single agent only. This is achieved by nesting modalities and letting the agent itself play the role of the spoiler:  the two-counter automaton halts on the empty input iff $\coopdown{1}(\neg\coopdown{1}\Next\prop{err})\NUntil\prop{halt}$. These undecidability proofs can be (directly) adapted to our setting; actually, due to the shared resources the technicalities are even simpler. We note that the undecidability proof does not require the full expressivity of strategies as dedined in this paper. Strategies which only take the history of states into account are sufficient to encode the behavior of a two-counter automaton. This corresponds to the fact that the automaton takes transitions based on the control states and whether the counters are zero or non-zero, but not the actual counter value. We refer to~\cite{Bulling15ral-IJCAI,Bulling/Farwer:10a} for further details about the construction. We obtain the following result:

\begin{corollary}[of~\cite{Bulling/Farwer:10a,Bulling15ral-IJCAI}]Model checking \ral  (with shared resoures) over $k$-unbounded ${\rbmi}s$ with $k\geq 2$ is undecidable, even in the following restricted cases:
\begin{enumerate}
\item In the case of a single agent and a fixed formula of the form $\coopdown{1}(\neg\coopdown{1}\Next\prop{p})\NUntil\prop{q}$.
\item In the case of two agents and a fixed formula of the form $\coopdown{1}\mathbf{F}\prop{p}$.
\end{enumerate}
\end{corollary}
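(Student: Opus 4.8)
The plan is to prove both restricted cases by reduction from the halting problem of deterministic two-counter (Minsky) machines on the empty input, which is undecidable~\cite{HopcroftUllmann79Automata}. Given such a machine $M$ with counters $c_1,c_2$, I would construct an \rbmi $\model$ with agent set $\Agt$ and two unbounded shared resource types $r_1,r_2$, one per counter, so that $k=2$; since the construction only \emph{adds} resource types, the general case $k\geq 2$ follows at once, and because every state offers an idle action the model is genuinely an \rbmi. The control locations of $M$ become states of $\model$, the value of counter $c_i$ is represented by the amount of the shared resource $r_i$ currently in the pool, and each instruction is encoded by actions: an increment of $c_i$ by an action producing one unit of $r_i$, a decrement by one consuming a unit of $r_i$, and a branch by a choice of action leading to the corresponding successor location. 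The initial endowment is $\bar 0$, matching the empty counters.

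The crux of the reduction is faithfully simulating the \emph{zero test}. A Minsky machine branches on whether a counter is zero, but in \model nothing inherently prevents the proponent from selecting the ``zero'' branch while a unit of $r_i$ is still available, which would corrupt the simulation. The remedy, following~\cite{Bulling/Farwer:10a,Bulling15ral-IJCAI}, is a spoiler mechanism. For the two-agent case I would let agent $2$ act as an external spoiler: whenever agent $1$ takes a zero-branch move while $r_i>0$, agent $2$ can consume the offending unit and divert the play into a sink from which $\prop{halt}$ is unreachable. Consequently no strategy of agent $1$ can cheat on a zero test, and one shows that $M$ halts iff $\model,q_0,\bar 0\models\coopdown{1}\mathbf{F}\prop{halt}$, with $\prop{halt}$ marking the halting location.

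For the single-agent case I would internalise the spoiler through nested modalities, as in~\cite{Bulling15ral-IJCAI}. A fresh proposition $\prop{err}$ labels exactly those configurations reached by an illegitimate zero-branch move, and the subformula $\neg\coopdown{1}\Next\prop{err}$ asserts that agent $1$ cannot reach an error in a single step, which is satisfiable precisely when no illegitimate zero-test move is enabled, i.e.\ when the tested counter is truly zero. The fixed formula $\coopdown{1}(\neg\coopdown{1}\Next\prop{err})\NUntil\prop{halt}$ (matching item~1 with $\prop{err}$ and $\prop{halt}$ in the roles of $\prop{p}$ and $\prop{q}$) then holds iff there is a legitimate, error-free computation reaching the halting location, i.e.\ iff $M$ halts.

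I expect the main obstacle to be the correctness argument rather than the construction: one must establish a tight correspondence between halting runs of $M$ and witnessing strategies in \model, and verify in each case that the spoiler device---external in the two-agent case, internalised via $\neg\coopdown{1}\Next\prop{err}$ in the single-agent case---enforces the zero tests exactly, neither rejecting a legitimate run nor admitting a cheating one. Since resources here are \emph{shared} rather than private, there is a single global pool per type and no per-agent budget to track, so the encoding is in fact simpler than in the cited reductions; I would therefore argue that the constructions of~\cite{Bulling/Farwer:10a,Bulling15ral-IJCAI} transfer essentially verbatim, the shared-resource setting only easing the bookkeeping, and defer the routine verification to those papers.
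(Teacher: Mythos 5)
Your proposal is correct and follows essentially the same route as the paper: a reduction from the halting problem of two-counter machines in which each counter becomes an unbounded shared resource type, the zero test is enforced by a spoiler (an external second agent for the formula $\coopdown{1}\mathbf{F}\prop{p}$, internalised via the nested $\neg\coopdown{1}\Next\prop{err}$ for the single-agent case), and the details are deferred to~\cite{Bulling/Farwer:10a,Bulling15ral-IJCAI}, noting as the paper does that shared resources only simplify the bookkeeping.
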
}
\vspace{-0.7cm}

\section{Conclusions}\label{sec:concl}
In this paper, we have introduced a variant of resource agent logic \ral~\cite{Bulling/Farwer:10a} with shared resources, which can be consumed and produced. We showed that the model checking problem is undecidable in the presence of at least two unbounded resource types. Our main technical result is a decidability proof of model checking \ral with one shared, unbounded resource type. Otherwise, we impose no restrictions, in particular nested cooperation modalities do not reset the resources available to agents. This property is sometimes called \emph{non-resource flatness}. In order to show decidability, we first
show how \CTL can be model-checked with respect to (compact) alternating B\"{u}chi pushdown systems extending results on model checking \CTL over pushdown and alternating pushdown systems~\cite{song2014efficient,bouajjani1997reachability}. A compact alternating B\"{u}chi pushdown system allows to read and to pop more than one symbol from its stack at a time. It is used for  encoding resource bounded models in order to apply the automata-based model checking algorithm.  

\textbf{Acknowledgement.} We would like to thank Natasha Alechina and Brian Logan for the many discussions on this topic and their valuable comments.

\bibliographystyle{plain}
\bibliography{prima,prima-bibs} 
\vspace{-0.5cm}

\end{document}